\journal{Linear Algebra and its Applications}
\newtheorem{theorem}{Theorem}
\newtheorem{definition}{Definition}
\newtheorem{lemma}[theorem]{Lemma}
\newtheorem{proposition}[theorem]{Proposition}
\newtheorem{corollary}[theorem]{Corollary}
\newtheorem{example}{Example}[section]
\newtheorem{conjecture}{Conjecture}[section]
\newtheorem{remark}{Remark}
\newenvironment{proof}{\paragraph{Proof}}{\hfill$\square$}
\begin{document}

\begin{frontmatter}

\title{The Lattice Structure of Linear Subspace Codes\tnoteref{mytitlenote}}
\tnotetext[mytitlenote]{Content of this paper was partially covered in the doctoral thesis of the first author at Indian Institute of Science, Bangalore.}

\author{Pranab Basu\fnref{myfootnote}}
\ead{pbmobius@gmail.com}

\author{Navin Kashyap\fnref{myfootnote}}
\ead{nkashyap@iisc.ac.in}
\fntext[myfootnote]{The authors are with the Department of Electrical Communication Engineering, Indian Institute of Science, Bangalore.}



\begin{abstract}
The projective space $\mathbb{P}_q(n)$, i.e. the set of all subspaces of the vector space $\mathbb{F}_q^n$, is a metric space endowed with the subspace distance metric. Braun, Etzion and Vardy argued that codes in a projective space are analogous to binary block codes in $\mathbb{F}_2^n$ using a framework of lattices. They defined linear codes in $\mathbb{P}_q(n)$ by mimicking key features of linear codes in the Hamming space $\mathbb{F}_2^n$. In this paper, we prove that a linear code in a projective space forms a sublattice of the corresponding projective lattice if and only if the code is closed under intersection. The sublattice thus formed is geometric distributive. We also present an application of this lattice-theoretic characterization.
\end{abstract}

\begin{keyword}
linear codes, projective lattice, subspace codes.
\MSC[2010] 06B35
\end{keyword}

\end{frontmatter}


\section{Introduction}
\label{S1}

Let $\mathbb{F}_q$ be the unique finite field with $q$ elements, where $q$ is a prime power. The \emph{projective space} $\mathbb{P}_q(n)$ is the collection of all subspaces of $\mathbb{F}_q^n$, the finite vector space of dimension $n$ over $\mathbb{F}_q$. In terms of notation,
\begin{equation*}
\mathbb{P}_q(n) := \{X: X \le \mathbb{F}_q^n\},
\end{equation*}
where $\le$ denotes the usual vector space inclusion. The \emph{Grassmannian} of dimension $k$, denoted by $\mathbb{G}_q(n, k)$, for all nonnegative integers $k \le n$ is defined as the set of all $k$-dimensional subspaces of $\mathbb{F}_q^n$, i.e. $\mathbb{G}_q(n, k) := \{Y \in \mathbb{P}_q(n) : \dim Y = k\}$. Thus $\mathbb{P}_q(n) = \bigcup\limits_{k=0}^{n} \mathbb{G}_q(n, k)$. The \emph{subspace distance}, defined as
\begin{eqnarray*}
d_S(X, Y) &:=& \dim (X+Y) - \dim (X \cap Y) \nonumber \\
&=& \dim X + \dim Y - 2\dim (X \cap Y),
\end{eqnarray*}
for all $X, Y \in \mathbb{P}_q(n)$, is a metric for $\mathbb{P}_q(n)$ \cite{KK, AAK}, where $X+Y$ denotes the smallest subspace containing both $X$ and $Y$. This turns both $\mathbb{P}_q(n)$ and $\mathbb{G}_q(n, k)$ into metric spaces. An $(n, M, d)$ code $\mathbb{C}$ in $\mathbb{P}_q(n)$ is a subset of the projective space with size $|\mathbb{C}| = M$ such that $d_S(X, Y) \ge d$ for all $X, Y \in \mathbb{C}$. The parameters $n$ and $d$ are called the \emph{length} and \emph{minimum distance} of the code, respectively. A code in a projective space is commonly referred to as a \emph{subspace code}. A subspace code $\mathbb{C}$ is called a \emph{constant dimension code} with fixed dimension $k$ if $\dim X = k$ for all $X \in \mathbb{C}$. In other words, $\mathbb{C} \subseteq \mathbb{G}_q(n, k)$ for some $k \le n$ if $\mathbb{C}$ is a constant dimension code. Koetter and Kschischang proved that in random network coding, a subspace code with minimum distance $d$ can correct any combination of $t$ errors and $\rho$ erasures introduced anywhere in the network if $2(t + \rho) < d$ \cite{KK}. This development triggered interest in codes in projective spaces in recent times \cite{EV, SE, SE2, HKK, GY, XF, TMBR, KoK, ER, BP, GR}.

We denote the collection of all subsets of the canonical $n$-set $[n] := \{1, \ldots, n\}$ as $\mathcal{P}(n)$, commonly known as the \emph{power set} of $[n]$. The authors of \cite{BEV} proved that codes in projective spaces can be viewed as the $q$-analog of binary block codes in Hamming spaces using the framework of lattices. A \emph{lattice} is a partially ordered set wherein any two elements have a least upper bound and a greatest lower bound existing within the set. Block codes in $\mathbb{F}_2^n$ correspond to the \emph{power set lattice} $(\mathcal{P}(n), \cup, \cap, \subseteq)$ while subspace codes in $\mathbb{P}_q(n)$ correspond to the \emph{projective lattice} $(\mathbb{P}_q(n), +, \cap, \le)$. Here $\subseteq$ signifies set inclusion. Braun et al. generalized a few properties of binary block codes to subspace codes including that of linearity \cite{BEV}. Linear codes in Hamming spaces find huge application in designing error correcting codes due to their structure \cite{MS, DR}. An $[n, k]$ linear block code in $\mathbb{F}_q^n$ is precisely a $k$-dimensional subspace of $\mathbb{F}_q^n$, where $k$ is the \emph{dimension} of such a code.

The notion of ``linearity'' in a projective space, however, is not straightforward. This stems from projective spaces not exhibiting vector space structure unlike Hamming spaces. In particular, $\mathbb{F}_q^n$ is a vector space with respect to the bitwise XOR-operation whereas $\mathbb{P}_q(n)$ is not a vector space under the usual vector space addition. Braun et al. solved this problem in \cite{BEV} by assigning a vector space-like structure to a subset of $\mathbb{P}_q(n)$.


The \emph{rate} of a linear code, i.e. the ratio of its dimension to length, is proportional to the size of the code. It is natural to ask how large a linear code in $\mathbb{P}_q(n)$ can be. Braun et al. conjectured the following in \cite{BEV}:

\begin{conjecture}
	\label{C}
	The maximum size of a linear code in $\mathbb{P}_q(n)$ is $2^n$.
	\end{conjecture}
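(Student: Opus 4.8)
The plan is to turn the counting problem into a lattice-theoretic one by exploiting the characterization summarized in the abstract. By that result, a linear code $\mathbb{C}$ that is closed under intersection forms a sublattice $L$ of the projective lattice $(\mathbb{P}_q(n), +, \cap, \le)$, and $L$ is geometric and distributive. First I would record the purely order-theoretic fact that a finite lattice which is simultaneously geometric (hence atomistic) and distributive must be Boolean: by Birkhoff's representation theorem a finite distributive lattice is the lattice of down-sets of its poset of join-irreducibles, while atomisticity forces every join-irreducible to be an atom. The atoms form an antichain, and the lattice of down-sets of an antichain of size $k$ is exactly the Boolean lattice $2^{[k]}$. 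Hence $L \cong 2^{[k]}$ and $|\mathbb{C}| = |L| = 2^k$, where $k$ is the number of atoms of $L$.

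The upper bound $2^n$ then follows by comparing ranks. Being Boolean of rank $k$, the lattice $L$ contains a maximal chain $\{0\} = C_0 < C_1 < \cdots < C_k = T$, where $T = \sum_{X \in \mathbb{C}} X$ is the top codeword. Since $L$ is a sublattice of $\mathbb{P}_q(n)$, this is a chain of strictly nested subspaces, so $0 = \dim C_0 < \dim C_1 < \cdots < \dim C_k = \dim T \le n$. Each strict inclusion raises the dimension by at least one, so $k \le \dim T \le n$, whence $|\mathbb{C}| = 2^k \le 2^n$.

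For tightness I would exhibit a linear code meeting the bound. Fixing a basis $e_1, \ldots, e_n$ of $\mathbb{F}_q^n$, the family $\{\langle e_i : i \in S\rangle : S \subseteq [n]\}$ of $2^n$ coordinate subspaces satisfies $\langle e_i : i \in S\rangle + \langle e_i : i \in T\rangle = \langle e_i : i \in S \cup T\rangle$ and $\langle e_i : i \in S\rangle \cap \langle e_i : i \in T\rangle = \langle e_i : i \in S \cap T\rangle$, so it is closed under both sum and intersection and is order-isomorphic to the power set lattice $(\mathcal{P}(n), \cup, \cap, \subseteq)$. Checking that it obeys the linearity axioms of \cite{BEV} then produces a linear code of size exactly $2^n$, so the bound cannot be improved.

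The main obstacle is the passage from intersection-closed codes to arbitrary linear codes, because the sublattice machinery above applies directly only under closure under intersection. I expect the crux to be showing that an extremal (largest) linear code may be taken to be intersection-closed — for instance by arguing that any linear code embeds in, or can be completed to, an intersection-closed linear code without losing codewords, or else by deriving the chain bound $k \le n$ directly from the $\mathbb{F}_2$-linear and complement structure of a linear code in the sense of \cite{BEV}. A secondary technical point is to confirm that the meet of the sublattice $L$ coincides with genuine subspace intersection, which is precisely what legitimizes the dimension count in the second step.
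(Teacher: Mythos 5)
The statement you are asked about is a \emph{conjecture}, and the paper does not prove it: it explicitly leaves open ``whether the BEV bound (Conjecture~\ref{C}) holds true for all linear codes.'' What the paper does prove (Theorem~\ref{UB}, originally from \cite{BK}, re-proved here via lattices) is the bound $|\mathcal{U}| \le 2^n$ \emph{only for linear codes closed under intersection}. Your first three paragraphs are a correct and essentially faithful rendition of that partial result: Corollary~\ref{SL} gives the sublattice structure, the geometric-distributive theorem of Section~\ref{S5} gives distributivity, and the bound follows from the height of the projective lattice. Your route through Birkhoff's representation theorem (geometric $+$ distributive $\Rightarrow$ Boolean on the atoms, then bound the rank by a chain of strictly nested subspaces) is a mild variant of the paper's direct appeal to Theorem~\ref{FT} (a distributive lattice of height $n$ has at most $2^n$ elements); the content is the same, and your tightness example is the standard one.

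The genuine gap is exactly the one you flag in your final paragraph, and it is not a ``technical point'' that can be expected to go through: there is no known way to complete an arbitrary linear code to an intersection-closed one, and the lattice machinery genuinely fails outside the intersection-closed class. Closing a linear code under intersection is not a benign operation --- adding $X \cap Y$ forces you to extend the group $(\mathcal{U},\boxplus)$ while preserving the isometry condition (iv) of Definition~\ref{L}, and nothing guarantees this can be done inside $\mathbb{P}_q(n)$. The paper's own remarks show how differently general linear codes can behave: there exist linear codes (not closed under intersection) with as many as $2^n - 1$ indecomposable codewords and without a unique indecomposable basis, so the atomistic/Boolean picture you rely on simply does not hold for them. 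The remark after Theorem~\ref{6p} also exhibits a linear code in which $(X_1 \cap X_2) \cap (X_1 \boxplus X_2) \ne \{0\}$, i.e.\ the identities underpinning your dimension counts break down when $X \cap Y \notin \mathcal{U}$. So your proposal proves the known special case but does not, and cannot by this method alone, establish the conjecture; the passage from intersection-closed codes to arbitrary linear codes is precisely the open problem.
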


Special cases of Conjecture~\ref{C} have been proved before \cite{BK, PS}. We proved the conjecture in \cite{BK} under the additional assumption of the codes being closed under intersection. In this paper, we bring out the lattice structure of linear subspace codes closed under intersection. In particular, we show that linear subspace codes are sublattices of the projective lattice if and only if they are closed under intersection. Moreover, these sublattices are geometric distributive. We then go on to use the lattice-theoretic characterization of this particular class of linear codes to give an alternative proof of the conjectured bound for them.
  


The rest of the paper is organized as follows. In Section~\ref{S2} we give the formal definition of a linear code in a projective space and some relevant definitions from lattice theory. Several properties of linear subspace codes are derived that highlight the $q$-analog structure of a binary linear block code. The Union-Intersection theorem is stated and proved in Section~\ref{S3}. As a consequence, we show the lattice structure of linear codes closed under intersection. We introduce the notion of pairwise disjoint codewords in linear subspace codes and establish some properties to show their linear independence in Section~\ref{S4}. Section~\ref{S5} is devoted to proving that the sublattice of the projective lattice formed by a linear code closed under intersection is geometric distributive. The proof uses the notion of indecomposable codewords which are particular cases of pairwise disjoint codewords. We then use the lattice-theoretic characterization to give an alternative proof of the maximal size of linear codes closed under intersection. Section~\ref{S6} contains a few open problems for future research.

\paragraph{Notation}
$\mathbb{F}_q^n$ denotes the finite vector space of dimension $n$ over $\mathbb{F}_q$. The set of all subspaces of $\mathbb{F}_q^n$ is denoted by $\mathbb{P}_q(n)$. The usual vector space sum of two subspaces $X$ and $Y$ when $X \cap Y = \{0\}$, also known as the \emph{direct sum} of $X$ and $Y$, will be written as $X \oplus Y$. For a binary vector or \emph{word} $x = (x_1, \ldots, x_n) \in \mathbb{F}_2^n$ of length $n$, the \emph{support} of $x$, denoted as $supp(x)$, will indicate the set of nonzero coordinates of $x$. In other words, $supp(x) := \{i: i \in [n], x_i = 1\}$. The support of a binary vector identifies it completely. The all-zero vector and the empty set will be denoted as $\mathbf{0} := (0, \ldots, 0)$ and $\emptyset$, respectively. For two binary words $x$ and $y$, the \emph{union} and \emph{intersection} of $x$ and $y$, denoted as $x \circ y$ and $x \ast y$ respectively, are defined via
\begin{eqnarray*}
supp(x \circ y) = supp(x) \cup supp(y); \\
supp(x \ast y) = supp(x) \cap supp(y).
\end{eqnarray*}
The coordinatewise modulo-2 addition, alternatively known as the \emph{binary vector addition}, of two binary words $x$ and $y$ is denoted by $x + y$. By definition, $supp(x + y) = supp(x) \triangle supp(y)$. Here $\triangle$ denotes the \emph{symmetric difference} operator, defined for sets $A$ and $B$ as
\begin{equation*}
	A \triangle B := (A \cup B) \backslash (A \cap B).
\end{equation*}
\section{Definitions and Relevant Background}
\label{S2}
\subsection{Linear Codes in Projective Spaces}

A linear code $\mathcal{U}$ in the projective space $\mathbb{P}_q(n)$ is defined as follows \cite{BEV}:
\begin{definition}
	\label{L}
	A subset $\mathcal{U}  \subseteq \mathbb{P}_q(n)$, with $\left\{0\right\} \in \mathcal{U}$, is a \emph{linear subspace code} if there exists a function $\boxplus : \mathcal{U} \times \mathcal{U} \rightarrow \mathcal{U}$ such that: \\
	(i) $(\mathcal{U}, \boxplus)$ is an abelian group; \\
	(ii) the identity element of $(\mathcal{U}, \boxplus)$ is $\left\{0\right\}$; \\
	(iii) $X \boxplus X = \left\{0\right\}$ for every group element $X \in \mathcal{U}$; \\
	(iv) the addition operation $\boxplus$ is isometric, i.e., $d_S(X \boxplus Y_1, X \boxplus Y_2) = d_S(Y_1, Y_2)$ for all $X, Y_1, Y_2 \in \mathcal{U}.$
\end{definition}
A subset $\mathcal{U}$ of $\mathbb{P}_q(n)$, together with corresponding $\boxplus$ operation, is called a \emph{quasi-linear code} if it satisfies only the first three conditions in the above definition. Conditions (i)-(iii) in Definition \ref{L} ensure that a quasi-linear code is a vector space over $\mathbb{F}_2$. Braun et al. proved the following about the size of a quasi-linear code in $\mathbb{P}_q(n)$ \cite{BEV}.
\begin{proposition}(\cite{BEV}, Proposition 2)
	\label{1}
	A subset $\mathcal{U} \subseteq \mathbb{P}_q(n)$, with $\left\{0\right\} \in \mathcal{U}$, is a quasi-linear code if and only if $\left|\mathcal{U}\right|$ is a power of 2.
\end{proposition}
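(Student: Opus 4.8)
The plan is to treat the two implications of this biconditional separately. The forward direction asserts that any quasi-linear code has size a power of $2$; the converse asserts that any subset of $\mathbb{P}_q(n)$ containing $\{0\}$ whose cardinality is a power of $2$ can be equipped with an operation $\boxplus$ satisfying conditions (i)--(iii) of Definition~\ref{L}.

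For the forward direction, the key observation---already flagged in the text preceding the statement---is that conditions (i)--(iii) force $(\mathcal{U}, \boxplus)$ to be an abelian group in which every element is its own inverse, i.e. $X \boxplus X = \{0\}$ for all $X$. I would exploit this to endow $\mathcal{U}$ with the structure of a vector space over $\mathbb{F}_2$: define scalar multiplication by $0 \cdot X := \{0\}$ and $1 \cdot X := X$. Condition (iii) is precisely what is needed to verify the distributive and compatibility axioms, so $(\mathcal{U}, \boxplus)$ becomes a finite $\mathbb{F}_2$-vector space. Any such space admits a finite basis of some size $k$, and each element is a unique $\mathbb{F}_2$-linear combination of the basis vectors, yielding exactly $|\mathcal{U}| = 2^k$. (A cleaner-to-state alternative is to invoke Cauchy's theorem: were an odd prime $p$ to divide $|\mathcal{U}|$, there would exist an element of order $p > 2$, contradicting condition (iii); hence $2$ is the only prime dividing $|\mathcal{U}|$.)

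For the converse, the crucial point is that Definition~\ref{L} only requires the \emph{existence} of some operation $\boxplus$, and places no constraint tying it to the subspace sum or intersection of $\mathbb{P}_q(n)$. I would therefore simply transport a known group structure across a bijection. Writing $|\mathcal{U}| = 2^k$, fix any bijection $\phi : \mathcal{U} \to \mathbb{F}_2^k$ that sends the distinguished element $\{0\}$ to the all-zero vector $\mathbf{0}$ (this is possible since we are free to choose $\phi$), and define $X \boxplus Y := \phi^{-1}\bigl(\phi(X) + \phi(Y)\bigr)$, where $+$ is coordinatewise addition in $\mathbb{F}_2^k$. Then $(\mathcal{U}, \boxplus)$ is isomorphic to $(\mathbb{F}_2^k, +)$, an abelian group whose identity $\mathbf{0}$ pulls back under $\phi^{-1}$ to $\{0\}$; and since every vector in $\mathbb{F}_2^k$ satisfies $v + v = \mathbf{0}$, we get $X \boxplus X = \phi^{-1}(\mathbf{0}) = \{0\}$. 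Thus conditions (i)--(iii) hold and $\mathcal{U}$ is quasi-linear.

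The argument is short, and the only step demanding genuine care is the forward direction, where one must justify rigorously that a finite abelian group annihilated by $2$ has order a power of $2$; this is standard but should be stated cleanly via either the $\mathbb{F}_2$-vector-space structure or Cauchy's theorem. The converse is essentially bookkeeping---transporting the group structure of $\mathbb{F}_2^k$ along a bijection---with the sole bit of attention being to choose the bijection so that the identity lands on $\{0\}$, which causes no difficulty.
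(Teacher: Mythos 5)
Your proof is correct and matches the intended argument: the paper itself notes immediately after Definition~\ref{L} that conditions (i)--(iii) make a quasi-linear code an $\mathbb{F}_2$-vector space, which gives the forward direction, and the converse is exactly the transport-of-structure argument via a bijection with $\mathbb{F}_2^k$ fixing $\{0\}$. The paper cites \cite{BEV} for the proof rather than reproducing it, but your two directions are the standard ones and both are sound.
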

A set of codewords in a linear subspace code will be said to be \emph{linearly independent} if the members of the set are linearly independent vectors in the vector space formed by the code over $\mathbb{F}_2$.

A quasi-linear code is linear when translation invariance is imposed on its structure, as indicated by condition (iv) in Definition \ref{L}. The linear addition $\boxplus$ thus becomes isometric and obeys certain properties. We list and prove these as lemmas, the first three of which are essentially reproduced from \cite{BEV}.
\begin{lemma}(\cite{BEV}, Lemma 6)
	\label{2}
	Let $\mathcal{U}$ be a linear code in $\mathbb{P}_q(n)$ and let $\boxplus$ be the isometric linear addition on $\mathcal{U}$. Then for all $X, Y \in \mathcal{U}$, we have:
	\begin{equation*}
	\dim(X \boxplus Y) = d_S(X, Y) = \dim X + \dim Y - 2\dim(X \cap Y)
	\end{equation*}
	In particular, if $X \subseteq Y$, then $\dim(X \boxplus Y) = \dim Y - \dim X$.
\end{lemma}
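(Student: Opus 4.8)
The plan is to reduce everything to the isometry property (iv) of Definition~\ref{L} by a single substitution, after first observing that the subspace distance to the identity element recovers the dimension. I would begin by noting that for any $Z \in \mathcal{U}$ we have $d_S(Z, \{0\}) = \dim Z + \dim\{0\} - 2\dim(Z \cap \{0\}) = \dim Z$, since $\dim\{0\} = 0$ and $Z \cap \{0\} = \{0\}$. This lets me express the quantity of interest as $\dim(X \boxplus Y) = d_S(X \boxplus Y, \{0\})$.

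The crucial step is to rewrite the identity element using condition (iii), namely $\{0\} = X \boxplus X$. This replaces $d_S(X \boxplus Y, \{0\})$ by $d_S(X \boxplus Y, X \boxplus X)$, where now both arguments are translates by the common summand $X$. At this point I would invoke the isometry (condition (iv)) with $Y_1 = Y$ and $Y_2 = X$, which cancels the common $X$ and yields $d_S(X \boxplus Y, X \boxplus X) = d_S(Y, X)$. Chaining these equalities together with the symmetry of the metric gives $\dim(X \boxplus Y) = d_S(X, Y)$, and the second equality in the statement is then just the definition of the subspace distance.

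For the \emph{in particular} clause I would simply specialize: if $X \subseteq Y$ then $X \cap Y = X$, so $\dim(X \cap Y) = \dim X$, and substituting into $\dim X + \dim Y - 2\dim(X \cap Y)$ collapses the expression to $\dim Y - \dim X$.

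I do not anticipate a genuine obstacle here. The only non-mechanical move is recognizing that condition (iii) should be read as $\{0\} = X \boxplus X$, so that the isometry can be applied with $X$ as the shared translate; everything else is the identity $\dim Z = d_S(Z, \{0\})$ plus symmetry of the metric. Once that substitution is in place the argument is purely formal, relying on no structural facts about the projective lattice beyond the group axioms and the definition of $d_S$.
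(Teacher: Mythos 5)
Your proof is correct and is essentially the paper's own argument: both reduce $\dim(X\boxplus Y)=d_S(X\boxplus Y,\{0\})$ to $d_S(X,Y)$ by writing $\{0\}$ as a self-sum and invoking the isometry of $\boxplus$ (the paper translates by $Y$, you translate by $X$, which is an immaterial difference). Your explicit treatment of the \emph{in particular} clause is a minor elaboration the paper leaves to the reader.
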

\begin{proof}
	By the definition of linear code, $d_S(X, Y) = d_S(X \boxplus Y, Y \boxplus Y) = d_S(X \boxplus Y, \left\{0\right\}) = \dim(X \boxplus Y)$. From the definition of $d_S(X, Y)$ the result follows.
\end{proof}

\begin{lemma}(\cite{BEV}, Lemma 7)
	\label{3}
	For any three subspaces $X, Y$ and $Z$ of a linear code $\mathcal{U}$ in $\mathbb{P}_q(n)$ with isometric linear addition $\boxplus$, the condition $Z = X \boxplus Y$ implies $Y = X \boxplus Z$.
\end{lemma}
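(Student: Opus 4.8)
The plan is to treat this purely as an identity in the abelian group $(\mathcal{U}, \boxplus)$, without invoking the isometry condition (iv) at all. Conditions (i)--(iii) of Definition~\ref{L} already give everything needed: $\boxplus$ is associative and commutative, $\left\{0\right\}$ is the identity, and by condition (iii) every element is its own inverse, so that $\mathcal{U}$ is an elementary abelian $2$-group (equivalently, a vector space over $\mathbb{F}_2$). The statement $Z = X \boxplus Y \implies Y = X \boxplus Z$ is then just the usual fact that in such a group each element acts as its own cancellation.

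Concretely, I would start from the hypothesis $Z = X \boxplus Y$ and compute $X \boxplus Z$ by substituting for $Z$:
\begin{equation*}
X \boxplus Z = X \boxplus (X \boxplus Y) = (X \boxplus X) \boxplus Y = \left\{0\right\} \boxplus Y = Y,
\end{equation*}
where the second equality uses associativity (condition (i)), the third uses $X \boxplus X = \left\{0\right\}$ (condition (iii)), and the last uses that $\left\{0\right\}$ is the identity (condition (ii)). This yields $Y = X \boxplus Z$, as required.

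There is no real obstacle here: the result is immediate once one recognizes that self-inverse elements make ``adding $X$'' an involution on $\mathcal{U}$, so applying it to both sides of $Z = X \boxplus Y$ recovers $Y$. The only point worth flagging for the reader is that this lemma is a structural consequence of the group axioms alone and does not depend on the metric or on the isometry property, which is why it will hold for any quasi-linear code as well, not merely for linear codes.
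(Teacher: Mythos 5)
Your proof is correct and is essentially identical to the paper's: both compute $X \boxplus (X \boxplus Y) = (X \boxplus X) \boxplus Y = \{0\} \boxplus Y = Y$ using only the group axioms (i)--(iii). Your observation that the isometry condition (iv) is not needed, so the lemma holds for quasi-linear codes as well, is accurate and consistent with the paper's argument.
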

\begin{proof}
	From the definition of linearity in $\mathbb{P}_q(n)$, we have $Y = (X \boxplus X) \boxplus Y = X \boxplus (X \boxplus Y) = X \boxplus Z$.
\end{proof}

The statement of the next lemma is altered from what was presented in \cite{BEV} as per our requirement.
\begin{lemma}(\cite{BEV}, Lemma 8)
	\label{4}
	Let $\mathcal{U}$ be a linear code in $\mathbb{P}_q(n)$ and let $\boxplus$ be the isometric linear addition on $\mathcal{U}$. If $X$ and $Y$ are any two codewords of $\mathcal{U}$ such that $X \cap Y = \left\{0\right\}$, then $X \boxplus Y = X \oplus Y$. Also $\dim(X \boxplus Y) = \dim X + \dim Y$.
\end{lemma}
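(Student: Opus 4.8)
The plan is to establish the dimension identity first and then upgrade it to the subspace identity by showing that both $X$ and $Y$ are contained in $X \boxplus Y$. For the dimension part, note that $X \cap Y = \left\{0\right\}$ gives $\dim(X \cap Y) = 0$, so Lemma~\ref{2} yields $\dim(X \boxplus Y) = \dim X + \dim Y - 2\dim(X \cap Y) = \dim X + \dim Y$ at once. The substance of the lemma is the claim $X \boxplus Y = X \oplus Y$, i.e. that the abstract group operation $\boxplus$ coincides with the ordinary (direct) subspace sum on trivially intersecting codewords.

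To prove this, I would set $Z := X \boxplus Y$ and aim to show $X \subseteq Z$ and $Y \subseteq Z$. The key manoeuvre is to invoke Lemma~\ref{3}: since $Z = X \boxplus Y$, we obtain $Y = X \boxplus Z$. Applying Lemma~\ref{2} to the pair $X, Z$ then gives $\dim Y = \dim(X \boxplus Z) = \dim X + \dim Z - 2\dim(X \cap Z)$. Substituting the already-computed value $\dim Z = \dim X + \dim Y$ and simplifying collapses this to $\dim(X \cap Z) = \dim X$. Since $X \cap Z \subseteq X$, this forces $X \cap Z = X$, that is, $X \subseteq Z$.

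Because the hypotheses are symmetric in $X$ and $Y$ (commutativity of $\boxplus$ gives $X \boxplus Y = Y \boxplus X$), the identical argument yields $Y \subseteq Z$, whence $X + Y \subseteq Z$. As $X \cap Y = \left\{0\right\}$, the sum $X + Y$ is direct and $\dim(X \oplus Y) = \dim X + \dim Y = \dim Z$. A subspace of equal dimension contained in $Z$ must equal $Z$, so $X \oplus Y = Z = X \boxplus Y$, completing the proof.

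I expect the main obstacle to be spotting the right application of Lemma~\ref{3}: the natural instinct is to work directly with $X \boxplus Y$, but the containment $X \subseteq X \boxplus Y$ is not transparent from the group axioms alone. Rewriting $Y$ as $X \boxplus Z$ and then reading off a dimension equation is precisely what converts the isometry hypothesis into the geometric statement $X \subseteq Z$; once that is in hand, the rest is a routine dimension count.
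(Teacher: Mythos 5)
Your proposal is correct and follows essentially the same route as the paper: both compute $\dim(X\boxplus Y)=\dim X+\dim Y$ from Lemma~\ref{2}, then rewrite one codeword as a $\boxplus$-sum of the other two (you write $Y=X\boxplus Z$; the paper writes $X=(X\boxplus Y)\boxplus Y$) and apply the dimension formula to force a containment, finishing by symmetry and a dimension count. The two arguments are the same up to which containment is derived first.
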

\begin{proof}
	From the definition of linearity, we have $\dim X = \dim((X \boxplus Y) \boxplus Y) = \dim(X \boxplus Y) + \dim Y - 2\dim((X \boxplus Y) \cap Y)$ and using the fact $X \cap Y = \left\{0\right\}$, we also have from Lemma~\ref{2} that $\dim(X \boxplus Y) = \dim X + \dim Y$. Combining both, we obtain $\dim X = \dim X + 2\dim Y - 2\dim((X \boxplus Y) \cap Y)$, which implies $\dim Y = \dim((X \boxplus Y) \cap Y)$, i.e., $Y \subseteq (X \boxplus Y)$. Similarly, $X \subseteq (X \boxplus Y)$. This means $X + Y \subseteq X \boxplus Y$. Finally, as $X \cap Y = \left\{0\right\}$, $\dim(X + Y) = \dim X + \dim Y = \dim(X \boxplus Y)$, which proves the lemma.
\end{proof}

The next lemma, which plays a pivotal role in our work, records some useful properties of the dimension of codewords in a linear subspace code.
\begin{lemma}
	\label{6d}
	If $\mathcal{U}$ is a linear subspace code and $X, Y \in \mathcal{U}$, then
	\begin{itemize}
		\item[(i)] $\dim X = \dim (X \cap Y) + \dim(X \cap (X \boxplus Y))$,
		\item[(ii)] $\dim(X \boxplus Y) = \dim(X \cap (X \boxplus Y)) + \dim(Y \cap (X \boxplus Y))$.
	\end{itemize}
\end{lemma}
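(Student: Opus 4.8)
The plan is to exploit the three-fold symmetry of the triple $X$, $Y$, and $Z := X \boxplus Y$ under the operation $\boxplus$, and then reduce both identities to a small linear system in the relevant dimensions.

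First I would set $Z = X \boxplus Y$ and observe that each of the three subspaces is the $\boxplus$-sum of the other two. Indeed, Lemma~\ref{3} applied to the relation $Z = X \boxplus Y$ gives $Y = X \boxplus Z$; applying Lemma~\ref{3} once more to the commuted relation $Z = Y \boxplus X$ gives $X = Y \boxplus Z$. Next I would feed each of the three relations $Z = X \boxplus Y$, $Y = X \boxplus Z$, and $X = Y \boxplus Z$ into the dimension formula of Lemma~\ref{2} to obtain
\begin{align*}
\dim Z &= \dim X + \dim Y - 2\dim(X \cap Y), \\
\dim Y &= \dim X + \dim Z - 2\dim(X \cap Z), \\
\dim X &= \dim Y + \dim Z - 2\dim(Y \cap Z).
\end{align*}

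The two claims now fall out of elementary manipulation. Adding the first and second relations, the terms $\dim Y$ and $\dim Z$ cancel from both sides, leaving $\dim X = \dim(X \cap Y) + \dim(X \cap Z)$, which is (i) once we recall $Z = X \boxplus Y$. Adding the second and third relations, the terms $\dim X$ and $\dim Y$ cancel, leaving $\dim Z = \dim(X \cap Z) + \dim(Y \cap Z)$, which is (ii). Note that the intersections $X \cap Y$, $X \cap Z$, $Y \cap Z$ are taken in the ambient space $\mathbb{F}_q^n$ and their dimensions are all that enter, so no closure of $\mathcal{U}$ under intersection is needed here.

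I do not anticipate a genuine obstacle: the entire content is the symmetry supplied by Lemma~\ref{3}, after which both identities are forced by cancellation in the Lemma~\ref{2} dimension equations. The only point demanding slight care is the second invocation of Lemma~\ref{3}, where one must first rewrite $X \boxplus Y$ as $Y \boxplus X$ using commutativity of $\boxplus$ before extracting $X = Y \boxplus Z$.
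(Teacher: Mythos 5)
Your proof is correct and follows essentially the same route as the paper: both arguments rest on the symmetry $Z = X \boxplus Y$, $Y = X \boxplus Z$, $X = Y \boxplus Z$ supplied by Lemma~\ref{3}, followed by cancellation in the Lemma~\ref{2} dimension equations. The only cosmetic difference is that the paper obtains (ii) by substituting $X \mapsto X \boxplus Y$ into (i) rather than by adding two equations of the symmetric system, but the underlying computation is identical.
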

\begin{proof}
	(i) By Definition~\ref{L}, every element of $\mathcal{U}$ is a self-inverse and $(\mathcal{U}, \boxplus)$ is an abelian group, hence $Y = (X \boxplus Y) \boxplus X$, and by applying Lemma~\ref{2} we get,
	\begin{equation*}
	\dim Y = \dim(X \boxplus Y) + \dim X - 2 \dim(X \cap (X \boxplus Y)).
	\end{equation*}
	Expanding $\dim(X \boxplus Y)$ using Lemma~\ref{2} and cancelling like terms from both sides, we get the desired result. \\
	(ii) Follows from (i) after substituting $X$ with $X \boxplus Y$. By Lemma~\ref{3}, $X \boxplus Y$ is then replaced by $(X \boxplus Y) \boxplus Y = X$, and the result follows.
\end{proof}

The dimension of the $\boxplus$ sum of two codewords in a linear subspace code is bounded from below, as shown next.
\begin{lemma}
	\label{6e}
	Let $\mathcal{U}$ be a linear subspace code. For all $X, Y \in \mathcal{U}$ the following is true:
	\begin{equation*}
	\dim (X \boxplus Y) \geq \dim X - \dim Y,
	\end{equation*}
	with equality if and only if $Y \subseteq X$.
\end{lemma}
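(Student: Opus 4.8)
The plan is to reduce the claimed inequality to an elementary statement about dimensions of intersections by invoking Lemma~\ref{2}. That lemma gives the exact value
\begin{equation*}
\dim(X \boxplus Y) = \dim X + \dim Y - 2\dim(X \cap Y),
\end{equation*}
so the whole lemma is really a statement about this closed-form expression rather than about the $\boxplus$ operation itself. I would substitute this into the desired bound $\dim(X \boxplus Y) \ge \dim X - \dim Y$ and cancel the common term $\dim X$ from both sides.

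After cancellation the inequality becomes $2\dim Y - 2\dim(X \cap Y) \ge 0$, i.e. $\dim(X \cap Y) \le \dim Y$. This holds automatically because $X \cap Y$ is a subspace of $Y$, and dimension is monotone under subspace inclusion. So the bound follows with essentially no computation beyond rewriting via Lemma~\ref{2}.

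For the equality characterization I would trace the chain of (in)equalities backwards. Equality in $\dim(X \boxplus Y) = \dim X - \dim Y$ holds if and only if $\dim(X \cap Y) = \dim Y$. Since $X \cap Y \le Y$, having equal dimension forces $X \cap Y = Y$, which is precisely the condition $Y \subseteq X$; conversely, if $Y \subseteq X$ then $X \cap Y = Y$ and the bound is met (this direction is also recorded directly in the final sentence of Lemma~\ref{2}).

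I do not anticipate a genuine obstacle here: the only thing to be careful about is keeping the reduction reversible so that the ``if and only if'' is fully justified rather than just the forward implication. The cleanest way to guarantee this is to observe that every step---the substitution from Lemma~\ref{2}, the cancellation of $\dim X$, and the final comparison $\dim(X \cap Y) \le \dim Y$---is an equivalence, so equality propagates in both directions and pins down $Y \subseteq X$ exactly.
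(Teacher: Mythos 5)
Your proposal is correct and matches the paper's own proof: both substitute the dimension formula from Lemma~\ref{2} and reduce the claim to the observation that $\dim(X \cap Y) \leq \dim Y$ with equality precisely when $Y \subseteq X$. The reversibility point you flag is exactly how the paper handles the equality case as well.
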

\begin{proof}
	As $X \cap Y \subseteq Y$, we must have $\dim (X \cap Y) \leq \dim Y$ and equality occurs only when $X \cap Y = Y$, i.e., when $Y \subseteq X$. Lemma~\ref{2} then implies that
	\begin{eqnarray*}
		\dim (X \boxplus Y) &=& \dim X + \dim Y - 2\dim (X \cap Y) \\
		&\geq& \dim X + \dim Y - 2\dim Y \\
		&=& \dim X - \dim Y.
	\end{eqnarray*}
	Equality occurs if and only if $\dim(X \cap Y) = \dim Y$, i.e. if and only if $Y \subseteq X$. 
\end{proof}
\begin{lemma}
	\label{6i}
	Let $\mathcal{U}$ be a linear subspace code and $X, Y$ be two distinct nontrivial codewords of $\mathcal{U}$. Then,
	\begin{itemize}
		\item[(a)] $Y \subset X$ if and only if $(X \boxplus Y) \subset X$.
		\item[(b)] $Y \subset X$ if and only if $Y \cap (X \boxplus Y) = \{0\}$.
	\end{itemize}
\end{lemma}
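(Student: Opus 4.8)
The plan is to prove part (a) directly from the dimension identities in Lemma~\ref{6d} and Lemma~\ref{2}, and then to deduce part (b) from part (a) together with Lemma~\ref{6d}(ii). The one structural idea I would exploit is the self-duality furnished by Lemma~\ref{3}: because $Y = X \boxplus (X \boxplus Y)$, the roles of $Y$ and $X \boxplus Y$ are interchangeable, so a single implication of (a) yields its converse upon replacing $Y$ by $X \boxplus Y$.

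First I would prove the forward direction of (a). Assuming $Y \subset X$, we have $X \cap Y = Y$, so Lemma~\ref{6d}(i) collapses to $\dim(X \cap (X \boxplus Y)) = \dim X - \dim Y$; meanwhile $Y \subseteq X$ lets Lemma~\ref{2} give $\dim(X \boxplus Y) = \dim X - \dim Y$ as well. Since $X \cap (X \boxplus Y)$ is contained in $X \boxplus Y$ and has the same dimension, $X \boxplus Y \subseteq X$, and the inclusion is strict because $\dim(X \boxplus Y) = \dim X - \dim Y < \dim X$, using the nontriviality of $Y$.

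The converse of (a) comes for free: writing $Z := X \boxplus Y$, distinctness of $X, Y$ makes $Z$ nontrivial, the hypothesis $Z \subset X$ makes $Z \neq X$, and Lemma~\ref{3} gives $Y = X \boxplus Z$, so the forward implication applied to the pair $(X, Z)$ returns $Y \subset X$. With (a) in hand, the forward direction of (b) is immediate: $Y \subset X$ forces $X \boxplus Y \subset X$, hence $X \cap (X \boxplus Y) = X \boxplus Y$, and substituting this into Lemma~\ref{6d}(ii) leaves $\dim(Y \cap (X \boxplus Y)) = 0$, i.e. $Y \cap (X \boxplus Y) = \{0\}$.

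For the converse of (b), assuming $Y \cap (X \boxplus Y) = \{0\}$ turns Lemma~\ref{6d}(ii) into $\dim(X \cap (X \boxplus Y)) = \dim(X \boxplus Y)$, so $X \boxplus Y \subseteq X$; were this an equality, then $Y \cap X = \{0\}$ and Lemma~\ref{4} would give $\dim(X \boxplus Y) = \dim X + \dim Y > \dim X$, a contradiction, so the inclusion is strict and the converse of (a) yields $Y \subset X$. I expect no serious obstacle here, since Lemma~\ref{6d} does the heavy lifting; the only care required is the bookkeeping of strict versus non-strict inclusions, where the nontriviality of $Y$ (and Lemma~\ref{4} in the last step) is exactly what upgrades each $\subseteq$ to $\subset$.
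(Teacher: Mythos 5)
Your proposal is correct and follows essentially the same route as the paper: dimension counting via Lemmas~\ref{2} and \ref{6d} for the forward direction of (a), the Lemma~\ref{3} involution $Y = X \boxplus (X \boxplus Y)$ for its converse, and the dimension identities again for (b). The only cosmetic difference is that you derive (b) by substituting into Lemma~\ref{6d}(ii) where the paper invokes Lemma~\ref{6e} directly, and you are in fact slightly more careful than the paper in checking that $Z = X \boxplus Y$ is nontrivial and distinct from $X$ before reapplying the forward implication.
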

\begin{proof}
	(a)
	(Proof of $\Rightarrow$) By Lemma~\ref{2}, $\dim (X \boxplus Y) = \dim X + \dim Y - 2\dim (X \cap Y)$. If $Y \subset X$, then $X \cap Y = Y$ and we have, $\dim (X \boxplus Y) = \dim X + \dim Y - 2\dim Y = \dim X - \dim Y$. On the other hand, by Lemma~\ref{6d} and the fact that $Y \subset X$, we have $\dim (X \cap (X \boxplus Y)) = \dim X - \dim (X \cap Y) = \dim X - \dim Y$. Thus $\dim (X \cap (X \boxplus Y)) = \dim (X \boxplus Y)$, which proves that $(X \boxplus Y) \subseteq X$. However, if $X \boxplus Y = X$, then by Definition~\ref{L}, $Y = \{ 0\}$, a contradiction, hence proved.  \\
	(Proof of $\Leftarrow$) Write $Z = X \boxplus Y$. If $Z \subset X$, then by logic similar to that presented above, $(X \boxplus Z) \subset X$. As $Y = X \boxplus Z$, the result follows. \\
	(b)
	(Proof of $\Rightarrow$) By Definition~\ref{L}, $X = (X \boxplus Y) \boxplus Y$, and using Lemma~\ref{2} we get
	\begin{equation}
	\label{E5a}
	\dim X = \dim(X \boxplus Y) + \dim Y - 2\dim((X \boxplus Y) \cap Y).
	\end{equation}
	If $Y \subset X$, by Lemma~\ref{6e} we have $\dim(X \boxplus Y) = \dim X - \dim Y$, which reduces \eqref{E5a} to $\dim ((X \boxplus Y) \cap Y) = 0$, whence the result follows. \\
	(Proof of $\Leftarrow$) Suppose $(X \boxplus Y) \cap Y = \{0\}$. According to Lemma~\ref{4} and Definition~\ref{L}, we get $\dim X = \dim((X \boxplus Y) \boxplus Y) = \dim (X \boxplus Y) + \dim Y$, whence $\dim (X \boxplus Y) = \dim X - \dim Y$, and the result follows by Lemma~\ref{6e}.
\end{proof}
\begin{remark}
	Equivalent results of Lemmas~\ref{2}--\ref{6i} for linear codes in $\mathbb{F}_2^n$ have already been established \cite{MS} or can easily be deduced.
\end{remark}
\subsection{An Overview of Lattices}

This section serves as a brief introduction to lattices. We will give a few basic definitions that can be found in \cite{B}. The notation and terminology used here are standard.

\begin{definition}
	\label{D1}
	A partially ordered set or poset $(P, \preceq)$ is a set $P$ in which a binary relation $\preceq$ is defined which satisfies the following conditions for all $x, y, z \in P$:
	\begin{itemize}
		\item[\textbf{$P_1$}.] (Reflexivity) For all $x$, $x \preceq x$.
		\item[\textbf{$P_2$}.] (Antisymmetry) If $x \preceq y, y \preceq x$, then $x = y$.
		\item[\textbf{$P_3$}.] (Transitivity) If $x \preceq y, y \preceq z$, then $x \preceq z$.
	\end{itemize}
\end{definition}

The binary relation $\preceq$ in a poset $(P, \preceq)$ is also called the \emph{order relation} for the poset. We will henceforth denote a poset $(P, \preceq)$ by $P$ and assume $\preceq$ as its order relation. If $x \preceq y$ and $x \neq y$, we will write $x \prec y$ and say that $x$ is ``less than'' or ``properly contained in'' $y$. If $x \prec y$ and there exists no $z \in P$ such that $x \prec z \prec y$, then $y$ is said to \emph{cover} the element $x$; we denote this as $x \lessdot y$.
\begin{definition}
	\label{D2}
	An upper bound of a subset $S$ of $P$ is an element $a \in P$ such that $s \preceq a$ for all $s \in S$. Similarly, the lower bound of a subset $S$ of $P$ is an element $b \in P$ satisfying $b \preceq s$ for every $s \in S$. The least upper bound (greatest lower bound) of $S$ is the element of $P$ contained in (containing) every upper bound (lower bound) of $S$.
\end{definition}

A least upper bound of a poset, if it exists, is unique according to the antisymmetry property of the order relation (\textbf{$P_2$}, Definition~\ref{D1}). Same holds for a greatest lower bound of a poset. We will use notations $\sup S$ and $\inf S$ for the least upper bound and greatest lower bound of a poset $S$, respectively.
\begin{definition}
	\label{D3}
	A lattice is a poset $L$ such that for any $x, y \in L$, $\inf \{ x, y\}$ and $\sup \{x, y\}$ exist. The $\inf \{ x, y\}$ is denoted as $x \wedge y$ and read as ``$x$ meet $y$'', while the $\sup \{x, y\}$ is denoted as $x \vee y$ and read as ``$x$ join $y$''. The lattice is denoted as $(L, \vee, \wedge)$. The unique least upper bound (greatest lower bound) of the whole lattice $L$, if it exists, is called the greatest (least) element of $L$.
\end{definition}

All the lattices considered in this work are finite and contain a unique greatest element denoted as $I$ and a unique least element denoted as $O$.
\begin{definition}
	\label{D4}
	A sublattice of a lattice $L$ is a subset $X$ of $L$ such that for all $a, b \in X$ it follows that $a \vee b \in X, a \wedge b \in X$.
\end{definition}

A sublattice is a lattice in its own right with the same meet and join operations as that of the lattice. However, not all subsets of a lattice are sublattices.
\begin{definition}
	\label{D5}
	A lattice $(L, \vee, \wedge)$ is distributive if any of the following two equivalent conditions holds for all $x, y, z \in L$:
	\begin{eqnarray*}
		x \vee (y \wedge z) = (x \vee y) \wedge (x \vee z) \\
		x \wedge (y \vee z) = (x \wedge y) \vee (x \wedge z).
	\end{eqnarray*}
\end{definition}
\begin{definition}
	\label{D6}
	A lattice $(L, \vee, \wedge)$ is modular if for all $a, b, c \in L$ such that $a \le c$, we have $a \vee (b \wedge c) = (a \vee b) \wedge c$.
\end{definition}

Not all lattices are distributive. If a lattice is distributive then the modularity condition automatically holds. Thus \emph{all distributive lattices are modular}. However, the opposite is not true as will be illustrated later.
In a lattice $L$, an element $x \in L$ is called an \emph{atom} if and only if $O \lessdot x$. Atoms play a significant role in defining lattices that are geometric.
\begin{definition}
	\label{D7}
	A finite lattice is geometric if it is modular and every element in the lattice is a join of atoms. If a geometric lattice is distributive then it is called geometric distributive.
\end{definition}

A set of elements $\{x_0, x_1, \ldots, x_n\}$ in a lattice is called a \emph{chain} if $x_i < x_{i+1}$ for all $0 \le i \le n-1$. The \emph{length} of this chain is $n$. The \emph{height} of a geometric lattice is the length of a maximal chain between its greatest and least elements.

Not all modular or distributive lattices are geometric. We will next discuss an example of a geometric lattice that is not distributive.
\begin{example}
	\label{PL}
	Recall that the projective space $\mathbb{P}_q(n)$ represents the set of all subspaces of $\mathbb{F}_q^n$, the finite vector space of dimension $n$ over $\mathbb{F}_q$. It is straightforward to verify that $(\mathbb{P}_q(n), \le)$ is a poset where the order relation is the usual subspace inclusion $\le$. The entire projective space is a lattice under this order relation. The join of two elements $X$ and $Y$ is therefore the smallest subspace containing both $X$ and $Y$. Similarly the meet of $X$ and $Y$ becomes the largest subspace contained in both $X$ and $Y$. Thus, in this lattice, the meet and join operations are defined as: $X \vee Y = X+Y, X \wedge Y = X \cap Y$ for all $X, Y \in \mathbb{P}_q(n)$. The greatest and least elements for this lattice are the ambient space $\mathbb{F}_q^n$ and the null space $\{0\}$, respectively. The atoms in $\mathbb{P}_q(n)$ are precisely the one dimensional vector spaces of $\mathbb{F}_q^n$, i.e., the set of atoms is $\mathcal{G}_q(n, 1)$. As $A + (B \cap C) = (A+B) \cap C$ for all $A, B, C \in \mathbb{P}_q(n)$ such that $A \subseteq C$, this lattice is modular. This, together with the fact that any element in the projective space is a union (vector space sum) of one dimensional subspaces, implies that the lattice is geometric. However, we do not have $(A \cap C) + (B \cap C) = (A+B) \cap C$ for all subspaces $A, B, C$ of $\mathbb{F}_q^n$ in general. Thus, the lattice is not distributive.
\end{example}

We refer to the lattice $(\mathbb{P}_q(n), +, \cap)$ as \emph{projective lattice}. Recall that any linear subspace code $\mathcal{U}$ in $\mathbb{P}_q(n)$ is a subset of $\mathbb{P}_q(n)$--- which, according to Definition~\ref{D4}, is not sufficient to guarantee a lattice structure. It is therefore natural to ask what additional condition(s) a linear code in a projective space should satisfy in order to assume a sublattice structure of the corresponding projective lattice. We investigate this problem in the following section.

\section{The Union-Intersection Theorem}
\label{S3}

We introduced the terms \emph{union} and \emph{intersection} of two codewords in a Hamming space in Section~\ref{S1}. The corresponding notions for linear codes in a projective space is straightforward: The union of two codewords $X$ and $Y$ is $X+Y$, while their intersection is $X \cap Y$. Observe that for any two codewords $x$ and $y$ in a linear code $\mathcal{C} \subseteq \mathbb{F}_2^n$, $supp((x \ast y) + (x \circ y)) = supp(x \ast y) \triangle supp(x \circ y) = (supp(x) \cap supp(y)) \triangle (supp(x) \cup supp(y)) = supp(x) \triangle supp(y) = supp(x+y)$, which proves that
\begin{equation}
\label{E12a}
x + y = (x \ast y) + (x \circ y).
\end{equation}
Thus the union and intersection of any two codewords in a classical binary linear code must coexist within the code according to \eqref{E12a}. Moreover, $supp((x \ast y) \ast (x + y)) = (supp(x) \cap supp(y)) \cap (supp(x) \triangle supp(y)) = \emptyset$, i.e. $(x \ast y) \ast (x+y) = \mathbf{0}$. We now prove that equivalent relations hold for linear codes in a projective space.
\begin{theorem}[Union-Intersection Theorem]
	\label{6p}
	Let $\mathcal{U}$ be a linear subspace code. If $X$ and $Y$ are two codewords in $\mathcal{U}$ then,
	\begin{equation*}
	X \cap Y \in \mathcal{U} \iff X + Y \in \mathcal{U}.
	\end{equation*}
	Furthermore, if $X \cap Y, X + Y \in \mathcal{U}$ then $(X \cap Y) \cap (X \boxplus Y) = \{0\}$, and
	\begin{equation*}
	X + Y = (X \boxplus Y) \boxplus (X \cap Y) = (X \boxplus Y) \oplus (X \cap Y).
	\end{equation*}
\end{theorem}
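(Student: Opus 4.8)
The guiding principle is the binary identity $x+y=(x\ast y)+(x\circ y)$ recorded in \eqref{E12a}: its projective analogue should read $X\boxplus Y=(X\cap Y)\boxplus(X+Y)$, which by the involution property (Lemma~\ref{3}) is equivalent both to the asserted $X+Y=(X\boxplus Y)\boxplus(X\cap Y)$ and to $X\cap Y=(X\boxplus Y)\boxplus(X+Y)$. The plan is therefore to prove each implication by exhibiting the missing codeword explicitly as a $\boxplus$-combination of codewords already known to lie in $\mathcal{U}$, and then to verify the claimed identity by first matching dimensions and afterwards upgrading to genuine subspace equality. Throughout I may assume $X\cap Y\neq\{0\}$ and that neither of $X$, $Y$ contains the other; in every excluded case both $X\cap Y$ and $X+Y$ already lie in $\mathcal{U}$ (they equal $\{0\}$ and $X\boxplus Y$, or else a codeword among $X,Y$), so the statement holds trivially. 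These assumptions also supply the proper inclusions required to invoke Lemma~\ref{6i}.

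Forward direction ($X\cap Y\in\mathcal{U}\Rightarrow X+Y\in\mathcal{U}$). I would put $W=X\cap Y$ and $X'=X\boxplus W$, $Y'=Y\boxplus W$, all lying in $\mathcal{U}$ by closure under $\boxplus$. Since $W\subseteq X$ and $W\subseteq Y$, Lemma~\ref{6i}(a) gives $X'\subseteq X$ and $Y'\subseteq Y$, while Lemma~\ref{2} gives $\dim X'=\dim X-\dim W$ and $\dim Y'=\dim Y-\dim W$. Cancelling $W$ yields $X\boxplus Y=X'\boxplus Y'$, and comparing $\dim(X'\boxplus Y')$ computed from Lemma~\ref{2} against $\dim(X\boxplus Y)$ forces $\dim(X'\cap Y')=0$. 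Hence $X'\cap Y'=\{0\}$, so by Lemma~\ref{4} one gets $X\boxplus Y=X'\oplus Y'\subseteq X+Y$. Identifying $X'$ and $Y'$ with $X\cap(X\boxplus Y)$ and $Y\cap(X\boxplus Y)$ (their dimensions agree by Lemma~\ref{6d}(i)) then gives $W\cap(X\boxplus Y)=X'\cap Y'=\{0\}$, and a final application of Lemma~\ref{4} delivers $(X\boxplus Y)\boxplus W=(X\boxplus Y)\oplus W=W\oplus X'\oplus Y'=X+Y$, which therefore lies in $\mathcal{U}$.

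Reverse direction ($X+Y\in\mathcal{U}\Rightarrow X\cap Y\in\mathcal{U}$). This runs in parallel, replacing $W$ by $S=X+Y$ and sums by complements. I would put $\bar X=S\boxplus X$ and $\bar Y=S\boxplus Y$; these lie in $\mathcal{U}$, and since $X,Y\subseteq S$ Lemma~\ref{2} gives $\dim\bar X=\dim S-\dim X$ and $\dim\bar Y=\dim S-\dim Y$. Cancelling $S$ yields $\bar X\boxplus\bar Y=X\boxplus Y$, and the same dimension comparison as before forces $\bar X\cap\bar Y=\{0\}$, whence $X\boxplus Y=\bar X\oplus\bar Y\subseteq S$ by Lemma~\ref{4}. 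Now set $T=(X\boxplus Y)\boxplus S\in\mathcal{U}$; because $X\boxplus Y\subseteq S$, Lemma~\ref{2} gives $\dim T=\dim S-\dim(X\boxplus Y)=\dim(X\cap Y)$, and Lemma~\ref{6i}(b) gives $T\cap(X\boxplus Y)=\{0\}$, hence $T\cap\bar X=T\cap\bar Y=\{0\}$. Finally the cancellation identities $T\boxplus X=\bar Y$ and $T\boxplus Y=\bar X$ let me apply Lemma~\ref{6d}(i) to the pairs $(T,X)$ and $(T,Y)$: since $\dim(T\cap\bar Y)=\dim(T\cap\bar X)=0$, these read $\dim T=\dim(T\cap X)$ and $\dim T=\dim(T\cap Y)$, forcing $T\subseteq X$ and $T\subseteq Y$. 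Thus $T\subseteq X\cap Y$, and as the dimensions already agree, $T=X\cap Y\in\mathcal{U}$.

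The step I expect to be the main obstacle is this last one: a dimension count only shows that the candidate codeword $T=(X\boxplus Y)\boxplus(X+Y)$ has the right dimension, and promoting this to the genuine equality $T=X\cap Y$ is where a naive argument fails, since equal dimension alone does not yield containment. The device that resolves it is to feed the cancellation identities $T\boxplus X=\bar Y$, $T\boxplus Y=\bar X$ into Lemma~\ref{6d}(i) together with the disjointness $T\cap\bar X=T\cap\bar Y=\{0\}$, which converts the dimensional statement into the two honest containments $T\subseteq X$ and $T\subseteq Y$. The remaining assertions of the theorem fall out along the way: the disjointness $(X\cap Y)\cap(X\boxplus Y)=\{0\}$ is exactly $W\cap(X\boxplus Y)=\{0\}$ in the forward argument and $T\cap(X\boxplus Y)=\{0\}$ in the reverse one, and the direct-sum refinement $X+Y=(X\boxplus Y)\oplus(X\cap Y)$ then follows from this disjointness via Lemma~\ref{4}.
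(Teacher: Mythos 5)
Your proposal is correct and follows essentially the same route as the paper: the forward direction realizes $X+Y$ as $(X\boxplus Y)\boxplus(X\cap Y)$ via the auxiliary codewords $X\boxplus(X\cap Y)$ and $Y\boxplus(X\cap Y)$, and the reverse direction recovers $X\cap Y$ as $X\boxplus Y\boxplus(X+Y)$ by the same dimension count upgraded to genuine containment (your appeal to Lemma~\ref{6d}(i) replaces the paper's double computation of $\dim(W\boxplus X)$, which amounts to the same thing). Your explicit disposal of the degenerate cases ($X\cap Y=\{0\}$, or one codeword containing the other) is a point of care that the paper leaves implicit when invoking Lemma~\ref{6i}.
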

\begin{proof}
	(Proof of $\Rightarrow$) Assume that $X \cap Y \in \mathcal{U}$ for some $X, Y \in \mathcal{U}$. Since $X \cap Y \subseteq X$ and $X \cap Y \subseteq Y$, by Lemma~\ref{6e} we get:
	\begin{eqnarray*}
		\dim (X \boxplus (X \cap Y)) &=& \dim X - \dim(X \cap Y), \\
		\dim (Y \boxplus (X \cap Y)) &=& \dim Y - \dim(X \cap Y).
	\end{eqnarray*}
	
	We first prove that $X \cap (Y \boxplus (X \cap Y)) = Y \cap (X \boxplus (X \cap Y)) = \{0\}$. Having proved this, we will show that $X \cap Y \cap (X \boxplus Y) = \{0\}$, which will help us to establish that $X + Y = X \boxplus Y \boxplus (X \cap Y)$. As $X \boxplus Y, X \cap Y \in \mathcal{U}$, this will suffice to prove that $X+Y \in \mathcal{U}$.
	
	Suppose $Z = X \cap (Y \boxplus (X \cap Y))$. Thus $Z \subseteq Y \boxplus (X \cap Y)$, and as $X \cap Y \subseteq Y$, by Lemma~\ref{6i}(a) we have
	\begin{equation}
	\label{E15}
	Y \boxplus (X \cap Y) \subseteq Y.
	\end{equation}
	Thus, \eqref{E15} implies that $Z \subseteq Y$. Combining this with the fact that $Z \subseteq X$ (Since $Z = X \cap (Y \boxplus (X \cap Y))$), we obtain $Z \subseteq X \cap Y$. Therefore,
	\begin{equation}
	\label{E15a}
	Z \subseteq (X \cap Y) \cap (Y \boxplus (X \cap Y)).
	\end{equation}
	However, according to Lemma~\ref{6i}(b), $(X \cap Y) \cap (Y \boxplus (X \cap Y)) = \{0\}$, hence \eqref{E15a} implies that $Z = \{0\}$. Similarly we can prove that $Y \cap (X \boxplus (X \cap Y)) = \{0\}$. This establishes our first claim. 
	
	As $X \cap (Y \boxplus (X \cap Y)) = \{0\}$, by Lemma~\ref{4} and Definition~\ref{L} we can write:
	\begin{eqnarray}
	\label{E16}
	\dim(X \boxplus Y \boxplus (X \cap Y)) &=& \dim (X \boxplus (Y \boxplus (X \cap Y))) \nonumber \\
	&=& \dim X + \dim(Y \boxplus (X \cap Y)) \nonumber \\
	&=& \dim X + (\dim Y - \dim(X \cap Y)) \nonumber \\
	&=& \dim(X+Y).
	\end{eqnarray}
	Observe that $\dim(X \boxplus Y) + \dim(X \cap Y) = \dim X + \dim Y - \dim(X \cap Y) = \dim(X+Y)$. We can now calculate $\dim(X \boxplus Y \boxplus (X \cap Y))$ in a different way:
	\begin{eqnarray}
	\label{E17}
	\dim(X \boxplus Y \boxplus (X \cap Y)) &=& \dim((X \boxplus Y) \boxplus (X \cap Y)) \nonumber \\
	&=& \dim(X+Y) - 2\dim(X \cap Y \cap (X \boxplus Y)).
	\end{eqnarray}
	
	Combining \eqref{E16} and \eqref{E17} gives us: $\dim(X \cap Y \cap (X \boxplus Y)) = 0$, i.e.,
	\begin{equation}
	\label{E18}
	X \cap Y \cap (X \boxplus Y) = \{0\}.
	\end{equation}
	Since $X \cap (Y \boxplus (X \cap Y)) = \{0\}$, by Definition~\ref{L} and Lemma~\ref{4} we can express $X \boxplus Y \boxplus (X \cap Y)$ as: $X \boxplus Y \boxplus (X \cap Y) = X \boxplus (Y \boxplus (X \cap Y)) = X \oplus (Y \boxplus (X \cap Y))$, which indicates that $X \subseteq X \boxplus Y \boxplus (X \cap Y)$. In a similar fashion we can prove that $Y \subseteq X \boxplus Y \boxplus (X \cap Y)$, thereby enabling ourselves to write:
	\begin{equation}
	\label{E19}
	X + Y \subseteq X \boxplus Y \boxplus (X \cap Y).
	\end{equation}
	\eqref{E16} and \eqref{E19} together imply that $X+Y = X \boxplus Y \boxplus (X \cap Y)$, which establishes our final claim. By virtue of \eqref{E18} and Lemma~\ref{4}, we can also write: $X+Y = (X \boxplus Y) \oplus (X \cap Y)$. \\
	(Proof of $\Leftarrow$) We assume that $X+Y \in \mathcal{U}$ for some codewords $X$ and $Y$ in $\mathcal{U}$. Let us consider $W \in \mathcal{U}$ such that,
	\begin{equation}
	\label{E21}
	W = X \boxplus Y \boxplus (X+Y).
	\end{equation}
	We claim that $W = X \cap Y$, establishing which will suffice to prove that $X \cap Y \in \mathcal{U}$. Observe from \eqref{E21} that $W \boxplus X = Y \boxplus (X+Y), W \boxplus Y = X \boxplus (X+Y)$. Since $X \subseteq X+Y, Y \subseteq X+Y$, Lemma~\ref{6e} implies the following:
	\begin{eqnarray}
	\label{E22}
	\dim(W \boxplus X) &=& \dim(X+Y) - \dim Y = \dim X - \dim (X \cap Y), \nonumber \\
	\dim(W \boxplus Y) &=& \dim(X+Y) - \dim X = \dim Y - \dim (X \cap Y).
	\end{eqnarray}
	Since $(\mathcal{U}, \boxplus)$ is an abelian group wherein any element is self-inverse, we can express $X \boxplus Y$ as: $X \boxplus Y = (W \boxplus X) \boxplus (W \boxplus Y)$. Then applying Lemma~\ref{2} gives us:
	\begin{eqnarray}
	\begin{aligned}
	\dim(X \boxplus Y) = &\dim(W \boxplus X) + \dim(W \boxplus Y) - 2\dim((W \boxplus X) \cap (W \boxplus Y)) \nonumber \\
	= &(\dim X - \dim(X \cap Y)) + (\dim Y - \dim(X \cap Y)) \nonumber \\
	&- 2\dim((W \boxplus X) \cap (W \boxplus Y)) \nonumber \\
	= &\dim(X \boxplus Y) - 2\dim((W \boxplus X) \cap (W \boxplus Y)).
	\end{aligned}
	\end{eqnarray}
	The above expression clearly indicates that,
	\begin{equation}
	\label{E23}
	(W \boxplus X) \cap (W \boxplus Y) = \{0\}.
	\end{equation}
	Since $X \boxplus Y = (W \boxplus X) \boxplus (W \boxplus Y)$, an immediate consequence of \eqref{E23} after using Lemma~\ref{4} is that,
	\begin{equation}
	\label{E24}
	X \boxplus Y = (W \boxplus X) + (W \boxplus Y).
	\end{equation}
	As $Y \subseteq X+Y$, Lemma~\ref{6i}(a) implies that $W \boxplus X = Y \boxplus (X+Y) \subseteq X+Y$, Similarly, $X \subseteq X+Y$ implies that $W \boxplus Y \subseteq X+Y$. Therefore, \eqref{E24} yields that,
	\begin{equation}
	\label{E25}
	X \boxplus Y \subseteq X+Y.
	\end{equation}
	Applying Lemma~\ref{6e} after combining \eqref{E21} and \eqref{E25} results in the following:
	\begin{eqnarray}
	\label{E26}
	\dim W &=& \dim((X \boxplus Y) \boxplus (X+Y)) \nonumber \\
	&=& \dim(X+Y) - \dim(X \boxplus Y) \nonumber \\
	&=& \dim(X \cap Y).
	\end{eqnarray}
	We now compute $\dim(W \boxplus X)$ in two different ways: first by recalling \eqref{E22} and then by using Lemma~\ref{2}. Equating both the expressions, we get $\dim X - \dim(X \cap Y) = \dim W + \dim X - 2\dim(W \cap X)$, which, after applying \eqref{E26} and cancelling like terms, reduces to:
	\begin{equation}
	\label{E27}
	\dim W = \dim(W \cap X).
	\end{equation}
	\eqref{E27} implies that $W \subseteq X$. Using similar technique we can also obtain $W \subseteq Y$, which therefore gives us:
	\begin{equation}
	\label{E28}
	W \subseteq X \cap Y.
	\end{equation}
	Comparison of \eqref{E26} and \eqref{E28} then yields $W = X \cap Y$, which establishes our claim. To complete the proof, observe that $X+Y = W \boxplus (X \boxplus Y) = (X \cap Y) \boxplus (X \boxplus Y)$ follows from \eqref{E21} and Lemma~\ref{3}. We also obtain $(X \boxplus Y) \cap (X \cap Y) = (X \boxplus Y) \cap (X \boxplus Y \boxplus (X+Y)) = \{0\}$ using Lemma~\ref{6i}(b) as $X \boxplus Y \subseteq X+Y$. Thus, by Lemma~\ref{4},
	\begin{equation*}
	X+Y = (X \boxplus Y) \oplus (X \cap Y).
	\end{equation*}
\end{proof}
\begin{remark}
	We proved that $(X \cap Y) \cap (X \boxplus Y) = \{0\}$ when $X \cap Y, X + Y \in \mathcal{U}$ for a linear code $\mathcal{U} \subseteq \mathbb{P}_q(n)$. However, this is not necessarily true when $X \cap Y$ and $X + Y$ do not belong to the code. For example, consider a code $\mathcal{C} = \{\{0\}, X_1, X_2, X_3\} \subseteq \mathbb{P}_2(n)$, where $X_1, X_2, X_3 \in \mathcal{G}_2(n, 2)$ such that $X_1, X_2, X_3$ are distinct and $Z := X_1 \cap X_2 \cap X_3 \in \mathcal{G}_2(n, 1)$. Define a commutative function $\boxplus: \mathcal{C} \times \mathcal{C} \rightarrow \mathcal{C}$ as follows: $Y \boxplus Y := \{0\}$ and $Y \boxplus \{0\} := Y$ for all $Y \in \mathcal{C}$, while $X_i \boxplus X_j := X_k$ for any distinct $i, j, k \in [3]$. It is easy to verify that the $\boxplus$ addition is isometric, hence $(\mathcal{C}, \boxplus)$ is a linear code. However, $X_1 \cap X_2, X_1 + X_2 \notin \mathcal{C}$ and $(X_1 \cap X_2) \cap (X_1 \boxplus X_2) = Z \ne \{0\}$. For two codewords $x, y$ in a  binary linear code $\mathcal{S}$, $(x \ast y) \ast (x + y) = \mathbf{0}$ irrespective of whether $x \ast y, x \circ y \in \mathcal{S}$ or not. This is in accordance with the fact that linearity is inherent in the entirety of a Hamming space. The same does not hold for projective spaces.
\end{remark}

The Union-Intersection theorem helps us to bring out the lattice structure in a certain class of linear codes. To elaborate, we recall the definition of \emph{linear codes closed under intersection} introduced in \cite{BK}.
\begin{definition}
	\label{8}
	A linear code $\mathcal{U} \subseteq \mathbb{P}_q(n)$ with the property that $X \cap Y \in \mathcal{U}$ whenever $X, Y \in \mathcal{U}$ is said to be a \emph{linear code closed under intersection}.
\end{definition}
According to Theorem~\ref{6p} a linear code is closed under intersection if and only if it is also closed under the usual vector space addition. Since linear subspace codes are subsets of the associated projective lattice, the following statement follows as a direct consequence of Theorem~\ref{6p}.
\begin{corollary}
	\label{SL}
	Let $\mathcal{U} \subseteq \mathbb{P}_q(n)$ be a linear subspace code.	$\mathcal{U}$ is a sublattice of the projective lattice $(\mathbb{P}_q(n), +, \cap)$ if and only if $\mathcal{U}$ is closed under intersection.
\end{corollary}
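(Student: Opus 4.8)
The plan is to translate the lattice-theoretic notion of a sublattice (Definition~\ref{D4}) into the two explicit closure conditions relevant to the projective lattice, and then invoke the Union-Intersection Theorem (Theorem~\ref{6p}) to collapse these two conditions into one. In the projective lattice $(\mathbb{P}_q(n), +, \cap)$ the join and meet of two codewords $X, Y$ are $X \vee Y = X + Y$ and $X \wedge Y = X \cap Y$, respectively. Hence, directly from Definition~\ref{D4}, $\mathcal{U}$ is a sublattice if and only if for all $X, Y \in \mathcal{U}$ we have both $X + Y \in \mathcal{U}$ (closure under join) and $X \cap Y \in \mathcal{U}$ (closure under meet).

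With this reformulation in hand, both implications are immediate. For the forward direction I would observe that a sublattice is in particular closed under the meet operation, i.e.\ $X \cap Y \in \mathcal{U}$ for all $X, Y \in \mathcal{U}$, which is precisely the defining property of a linear code closed under intersection in Definition~\ref{8}. For the converse, assuming $\mathcal{U}$ is closed under intersection, the forward implication of Theorem~\ref{6p} supplies $X + Y \in \mathcal{U}$ whenever $X \cap Y \in \mathcal{U}$; combined with the hypothesis this gives closure under both meet and join, so $\mathcal{U}$ is a sublattice.

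I do not anticipate any genuine obstacle here: the substantive work has already been carried out in Theorem~\ref{6p}, and the corollary is essentially a dictionary entry matching ``sublattice'' against ``closed under intersection.'' The only matter deserving a moment's attention is confirming that the ambient hypotheses of a linear subspace code---in particular $\{0\} \in \mathcal{U}$ by Definition~\ref{L}---are already in force, so that no degenerate or empty-set issues arise; these are guaranteed by the standing assumption that $\mathcal{U}$ is a linear subspace code.
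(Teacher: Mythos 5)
Your argument is correct and matches the paper's own reasoning: the paper likewise observes that Theorem~\ref{6p} makes closure under intersection equivalent to closure under sum, so a linear code closed under intersection is closed under both the meet and join of the projective lattice, and conversely a sublattice is in particular closed under meet. Nothing further is needed.
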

\section{Pairwise Disjoint Codewords in Linear Subspace Codes}
\label{S4}

Two vectors in a Hamming space are \emph{disjoint} if their intersection is empty. It is easy to verify that a set of pairwise disjoint vectors in $\mathbb{F}_2^n$ are linearly independent over $\mathbb{F}_2$. We will prove an analogous result for linear codes in a projective space. First we formally define a set of \emph{pairwise disjoint} codewords in a linear code.
\begin{definition}
	\label{P}
	A set of codewords $\{X_1, \ldots, X_r\}$ in a linear subspace code $\mathcal{U} \subseteq \mathbb{P}_q(n)$ is \emph{pairwise disjoint} if
	\begin{equation*}
	X_i \cap X_j = \{0\} \ \ \ \ \ \ \ \ \ \ \text{for all} \ 1 \leq i \neq j \leq r.
	\end{equation*}
\end{definition}
We are now going to establish that any set of pairwise disjoint codewords in a linear subspace code is linearly independent. To this end, we need certain properties of any finite $m \ge 3$ number of pairwise disjoint codewords. First we prove the base case when $m = 3$.
\begin{lemma}
	\label{6m}
	If $X_1, X_2, X_3$ are pairwise disjoint nontrivial codewords in a linear subspace code $\mathcal{U}$ then, $X_i \cap (X_j \boxplus X_k) = \{ 0\}$ for distinct $i, j, k \in [3]$. Furthermore, $X_1 \boxplus X_2 \boxplus X_3 = X_1 + X_2 + X_3$, and $\dim(X_1 \boxplus X_2 \boxplus X_3) = \dim X_1 + \dim X_2 + \dim X_3$.
\end{lemma}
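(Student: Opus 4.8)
The plan is to reduce the disjointness claim $X_i \cap (X_j \boxplus X_k) = \{0\}$ to a clean identity about the three ``pairwise sums'' $X_i \boxplus X_j$, exploiting the group structure of $\boxplus$. Throughout I would write $A = X_1$, $B = X_2$, $C = X_3$ and set $U = A \boxplus B$, $W = B \boxplus C$, $R = A \boxplus C$. Since $A \cap B = B \cap C = A \cap C = \{0\}$, Lemma~\ref{4} tells us that each pairwise sum is an honest (set-theoretic) direct sum: $U = A \oplus B$, $W = B \oplus C$, $R = A \oplus C$, and in particular $\dim U = \dim A + \dim B$, with the analogous formulas for $W$ and $R$. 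Note also that $A, B \subseteq U$ and $B, C \subseteq W$, and so on.

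The key observation is that these three codewords are themselves linearly dependent under $\boxplus$: because every element of $\mathcal{U}$ is self-inverse and $(\mathcal{U}, \boxplus)$ is abelian, $U \boxplus W = (A \boxplus B) \boxplus (B \boxplus C) = A \boxplus C = R$. Applying the dimension formula of Lemma~\ref{2} to $U$ and $W$ then gives $\dim R = \dim U + \dim W - 2\dim(U \cap W)$. Substituting the dimensions computed above, every term cancels except $\dim(U \cap W) = \dim B$. Since $B \subseteq U$ and $B \subseteq W$ force $B \subseteq U \cap W$, this dimension equality upgrades to the exact set identity $U \cap W = B$.

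From here the disjointness is immediate: as $A \subseteq U$, we have $A \cap W = A \cap (U \cap W) = A \cap B = \{0\}$, which is exactly $X_1 \cap (X_2 \boxplus X_3) = \{0\}$. The remaining instances $X_i \cap (X_j \boxplus X_k) = \{0\}$ follow by relabelling the indices, since the argument is symmetric in $A, B, C$. Finally, having established $A \cap (B \boxplus C) = \{0\}$, one more application of Lemma~\ref{4} to the codewords $A$ and $B \boxplus C$ yields $A \boxplus (B \boxplus C) = A \oplus (B \boxplus C)$ together with $\dim(X_1 \boxplus X_2 \boxplus X_3) = \dim A + \dim(B \boxplus C) = \dim A + \dim B + \dim C$; and since $B \boxplus C = B \oplus C$ as sets, $A \oplus (B \boxplus C) = A + B + C$, giving $X_1 \boxplus X_2 \boxplus X_3 = X_1 + X_2 + X_3$.

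The step I expect to be the crux --- and the one where a naive approach stalls --- is pinning down that $U \cap W$ is \emph{exactly} $B$ rather than merely containing it. A pure dimension count on $A$, $B$, $C$ alone cannot achieve this, because three pairwise-disjoint subspaces need not be independent as vector spaces: three distinct lines in a plane are pairwise disjoint, yet any one of them lies in the sum of the other two. The linear-code structure is precisely what rules this out. It is the relation $U \boxplus W = R$ among the pairwise sums, fed through Lemma~\ref{2}, that collapses $\dim(U \cap W)$ down to $\dim B$ and thereby forbids the ``three lines in a plane'' configuration.
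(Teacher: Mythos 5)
Your proof is correct, and its engine is the same as the paper's: both exploit the relation $(X_i \boxplus X_k) \boxplus (X_j \boxplus X_k) = X_i \boxplus X_j$ among the pairwise sums, feed it through Lemma~\ref{2}, and use the dimension count to upgrade an obvious containment to the exact identity that the intersection of two pairwise sums equals their common summand. The difference lies in which instance of that identity you each choose and how you finish. The paper proves $(X_1 + X_3) \cap (X_2 + X_3) = X_3$ and then must resort to an element-chasing contradiction (``suppose some nonzero $x_3 = x_1 + x_2$, then $x_2 = x_3 - x_1 \in (X_1+X_3)\cap(X_2+X_3) = X_3$'') to extract $X_3 \cap (X_1 + X_2) = \{0\}$, because neither of the subspaces in its identity contains $X_1 + X_2$. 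You instead prove $(X_1 \boxplus X_2) \cap (X_2 \boxplus X_3) = X_2$, and since $X_1 \subseteq X_1 \boxplus X_2$, the desired disjointness $X_1 \cap (X_2 \boxplus X_3) \subseteq X_1 \cap X_2 = \{0\}$ falls out by pure subspace containment, with no need to manipulate individual vectors. That is a genuinely cleaner finish, and your closing remark correctly identifies why some such code-theoretic input is unavoidable: three pairwise-disjoint subspaces of $\mathbb{F}_q^n$ (e.g.\ three lines in a plane) need not satisfy the conclusion, so the relation $U \boxplus W = R$ supplied by linearity is doing real work. The remainder of your argument (the two applications of Lemma~\ref{4} to get $X_1 \boxplus X_2 \boxplus X_3 = X_1 + X_2 + X_3$ and the dimension formula) matches the paper's.
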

\begin{proof}
	To prove the first part of the lemma, it suffices to show that $X_3 \cap (X_1 \boxplus X_2) = \{0\}$. As $X_1 \cap X_3 = \{ 0\}$, by Lemma~\ref{4} $X_1 \boxplus X_3 = X_1 \oplus X_3$, which also implies $X_3 \subset (X_1 \boxplus X_3)$. Similarly, $X_3 \subset (X_2 \boxplus X_3)$. Thus we can write $X_3 \subseteq (X_1 \boxplus X_3) \cap (X_2 \boxplus X_3)$. According to Definition~\ref{L}, we have $X_1 \boxplus X_2 = (X_1 \boxplus X_3) \boxplus (X_2 \boxplus X_3)$ and using Lemma~\ref{2} yields
	\begin{equation*}
	\dim(X_1 \boxplus X_2) = \dim(X_1 \boxplus X_3) + \dim(X_2 \boxplus X_3) - 2\dim ((X_1 \boxplus X_3) \cap (X_2 \boxplus X_3)).
	\end{equation*}
	Since $X_1, X_2, X_3$ are pairwise disjoint codewords, using Lemma~\ref{4} the above equation can also be expressed as
	\begin{equation*}
	\begin{aligned}
	\dim X_1 + \dim X_2 = &(\dim X_1 + \dim X_3) + (\dim X_2 + \dim X_3) \\
	&- 2\dim((X_1 \boxplus X_3) \cap (X_2 \boxplus X_3)),
	\end{aligned}
	\end{equation*}
	which reduces to $\dim X_3 = \dim((X_1 \boxplus X_3) \cap (X_2 \boxplus X_3))$. But $X_3 \subseteq (X_1 \boxplus X_3) \cap (X_2 \boxplus X_3)$. Combining both, we get $X_3 = (X_1 \boxplus X_3) \cap (X_2 \boxplus X_3)$ and by Lemma~\ref{4} this is equivalent to
	\begin{equation}
	\label{E10}
	X_3 = (X_1 + X_3) \cap (X_2 + X_3).
	\end{equation}
	We now claim that $X_3 \cap (X_1 \boxplus X_2) = X_3 \cap (X_1 + X_2) = \{0\}$. Suppose not, then there must exist some nonzero $x_3 \in X_3$ such that $x_3 = x_1 + x_2$, with $x_1 \in X_1$ and $x_2 \in X_2$. Since the pairwise intersections of $X_1, X_2, X_3$ are trivial, neither $x_1$ nor $x_2$ belongs to $X_3$. Also, both $x_1$ and $x_2$ are nonzero. Then $x_2 = x_3 - x_1$, which means $x_2$ is in $X_1 + X_3$. Thus $x_2 \in (X_1 + X_3) \cap (X_2 + X_3)$, which contradicts \eqref{E10}. Hence $X_3 \cap (X_1 \boxplus X_2) = \{0\}$. Combining this with Lemma~\ref{4}, we can write:
	\begin{equation*}
	X_1 \boxplus X_2 \boxplus X_3 = (X_1 \boxplus X_2) + X_3 = X_1 + X_2 + X_3.
	\end{equation*}
	The rest follows from the fact that $X_1 \cap X_2 = X_3 \cap (X_1 + X_2) = \{0\}$.
\end{proof}



We are now in a position to prove the general case for any finite $m \ge 3$ number of pairwise disjoint codewords.
\begin{lemma}
	\label{6n}
	Let $\{Y_1, \ldots, Y_m\}$ be a set of pairwise disjoint nontrivial codewords in a linear subspace code $\mathcal{U}$. Then,
	\begin{itemize}
		\item[(a)] for all $j \in [m]$, $Y_j \cap \sum_{i \in [m] \backslash \left\{j\right\}}{Y_i} = \{0\}$; 
		\item[(b)] $Y_1 \boxplus Y_2 \boxplus \cdots \boxplus Y_m = Y_1 + Y_2 + \cdots + Y_m$; and
		\item[(c)] $\dim(Y_1 \boxplus Y_2 \boxplus \cdots \boxplus Y_m) = \sum_{i=1}^m \dim Y_i$.
	\end{itemize}
\end{lemma}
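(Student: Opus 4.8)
The plan is to argue by strong induction on $m$, taking Lemma~\ref{6m} (the case $m=3$) as the base case; the case $m=2$ is immediate from Lemma~\ref{4}, since disjointness of $Y_1,Y_2$ gives $Y_1\boxplus Y_2 = Y_1 \oplus Y_2$. So fix $m \geq 4$ and assume that (a), (b) and (c) hold for every collection of fewer than $m$ pairwise disjoint nontrivial codewords. The whole argument hinges on establishing the single identity
\begin{equation*}
(Y_1 + \cdots + Y_{m-1}) \cap Y_m = \{0\},
\end{equation*}
i.e. that the codeword $W := Y_1 \boxplus \cdots \boxplus Y_{m-1}$ is disjoint from $Y_m$. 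Once this is in hand everything falls out quickly: by the induction hypothesis $W = Y_1 + \cdots + Y_{m-1}$ with $\dim W = \sum_{i=1}^{m-1}\dim Y_i$, and since $W \cap Y_m = \{0\}$, Lemma~\ref{4} gives $Y_1 \boxplus \cdots \boxplus Y_m = W \boxplus Y_m = W \oplus Y_m = Y_1 + \cdots + Y_m$, which is (b), while $\dim(W \boxplus Y_m) = \dim W + \dim Y_m = \sum_{i=1}^m \dim Y_i$ is (c). Part (a) then follows because (b) and (c) together force $\dim(Y_1 + \cdots + Y_m) = \sum_{i=1}^m \dim Y_i$, i.e. the sum is direct, and directness is exactly equivalent to $Y_j \cap \sum_{i \neq j} Y_i = \{0\}$ for every $j$.

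The crux is therefore proving $W \cap Y_m = \{0\}$, and here I would reduce to the already-established triple case. Set $V := Y_1 \boxplus \cdots \boxplus Y_{m-2}$, so that by the induction hypothesis $V = Y_1 + \cdots + Y_{m-2} \in \mathcal{U}$ is a nontrivial codeword. I would then check that $V$, $Y_{m-1}$, $Y_m$ form a pairwise disjoint triple of nontrivial codewords: the intersection $V \cap Y_{m-1} = (Y_1 + \cdots + Y_{m-2}) \cap Y_{m-1}$ vanishes by part (a) of the induction hypothesis applied to the $(m-1)$-set $\{Y_1,\ldots,Y_{m-1}\}$; the intersection $V \cap Y_m = (Y_1 + \cdots + Y_{m-2}) \cap Y_m$ vanishes by part (a) applied to the $(m-1)$-set $\{Y_1,\ldots,Y_{m-2},Y_m\}$; and $Y_{m-1} \cap Y_m = \{0\}$ by hypothesis. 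Applying Lemma~\ref{6m} to this triple yields $Y_m \cap (V \boxplus Y_{m-1}) = \{0\}$, and since $V \cap Y_{m-1} = \{0\}$ forces $V \boxplus Y_{m-1} = V + Y_{m-1} = Y_1 + \cdots + Y_{m-1} = W$ by Lemma~\ref{4}, this is precisely $W \cap Y_m = \{0\}$.

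The main obstacle I anticipate is exactly the verification of pairwise disjointness of the auxiliary triple $(V, Y_{m-1}, Y_m)$, since this is where the genuine content lives: pairwise disjointness of arbitrary subspaces does \emph{not} imply that their sum is direct, so part (a) cannot be taken for granted and must be fed in through the induction hypothesis at precisely the right indexings. The use of strong induction (rather than ordinary induction) is essential here, because I invoke the $(m-1)$-case for two different $(m-1)$-subsets and the $(m-2)$-case to identify $V$ with the honest vector-space sum $Y_1 + \cdots + Y_{m-2}$. Everything else is routine dimension bookkeeping via Lemmas~\ref{2} and~\ref{4}.
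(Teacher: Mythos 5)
Your proposal is correct and follows essentially the same route as the paper: both form the auxiliary codeword $Y_1 \boxplus \cdots \boxplus Y_{m-2}$, use the induction hypothesis (applied to the appropriate smaller subsets) to show it forms a pairwise disjoint triple with $Y_{m-1}$ and $Y_m$, and then invoke Lemma~\ref{6m} to get the key disjointness $(Y_1+\cdots+Y_{m-1}) \cap Y_m = \{0\}$, after which (b) and (c) follow from Lemma~\ref{4}. Your explicit use of strong induction and your derivation of part (a) for all $j$ from the directness of the sum are, if anything, slightly more careful than the paper's write-up, which obtains (a) only for $j=m$ directly and leaves the remaining indices to symmetry.
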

\begin{proof}
	We prove (a)--(c) simultaneously by induction on $m$, the number of pairwise disjoint codewords. The base case of two codewords for (b)--(c) is covered by Lemma~\ref{4} while that for (a) is because of the assumption of pairwise disjointness. As the induction hypothesis, assume that the statements (a)--(c) hold for any set of $m-1$ pairwise disjoint codewords, for some $m \ge 3$. In particular, for any $(m-1)$-subset $\mathcal{I} \subset [m]$, we have $\boxplus_{i \in \mathcal{I}} Y_i = \sum_{i \in \mathcal{I}} Y_i$ and $\dim(\boxplus_{i \in \mathcal{I}} Y_i) = \sum_{i\in\mathcal{I}}\dim Y_i$.
	
	Consider $Z = {\boxplus}_{i \in [m-2]} Y_i$. By the induction hypothesis, $Z = {\sum}_{i=1}^{m-2} Y_i$ and $\dim Z = {\sum}_{i=1}^{m-2} \dim Y_i$, whereas $Z \cap Y_{m-1} = Z \cap Y_m = \{0\}$. Thus, according to Definition~\ref{P} $Z, Y_{m-1}, Y_m$ are pairwise disjoint nontrivial codewords of $\mathcal{U}$ and by Lemma~\ref{6m}, $(Z \boxplus Y_{m-1}) \cap Y_m = (Y_1 \boxplus \cdots \boxplus Y_{m-1}) \cap Y_m = \{0\}$. Hence, (a) must hold for $Y_1, \ldots, Y_m$.
	
	To prove (b), observe that $Y_m \cap \sum\limits_{i=1}^{m-1} Y_i = \{0\}$ according to part (a). Then, by the induction hypothesis and Lemma~\ref{4} we have:
	\begin{equation*}
	\boxplus_{i=1}^{m} Y_i = (\boxplus_{i=1}^{m-1} Y_i) \boxplus Y_m =  \left(\sum\limits_{i=1}^{m-1} Y_i\right) \boxplus Y_m = \left(\sum\limits_{i=1}^{m-1} Y_i\right) + Y_m = \sum\limits_{i=1}^{m} Y_i.
	\end{equation*}
	
	Finally, parts (a), (b) and Lemma~\ref{4} imply:
	\begin{eqnarray*}
		\dim(\boxplus_{i=1}^{m} Y_i) &=& \dim((\boxplus_{i=1}^{m-1} Y_i) \boxplus Y_m) 
		= \dim\left(\left(\sum\limits_{i=1}^{m-1} Y_i\right) \boxplus Y_m\right) \\
		&=& \dim\left(\sum\limits_{i=1}^{m-1} Y_i\right) + \dim Y_m \\
		&=& \left(\sum\limits_{i=1}^{m-1} \dim Y_i\right) + \dim Y_m = \sum\limits_{i=1}^{m} \dim Y_i,
	\end{eqnarray*}
	which proves part(c).
\end{proof}
\begin{theorem}
	\label{IB}
	Any set of pairwise disjoint codewords in a linear subspace code is linearly independent over $\mathbb{F}_2$ with respect to the corresponding linear addition.
\end{theorem}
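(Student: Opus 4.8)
The plan is to recast $\mathbb{F}_2$-linear independence as a purely dimensional statement and then read the conclusion off Lemma~\ref{6n}(c). I would begin with a set $\{X_1, \ldots, X_r\}$ of pairwise disjoint codewords in $\mathcal{U}$. Since the zero vector $\{0\}$ can never belong to a linearly independent set, I may assume each $X_i$ is nontrivial, matching the hypotheses of Lemmas~\ref{6m} and~\ref{6n}; note also that pairwise disjoint nontrivial codewords are automatically distinct, because $X_i = X_j$ with $i \ne j$ would force $X_i \cap X_j = X_i \ne \{0\}$.

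The first step is the standard observation that over $\mathbb{F}_2$ the only scalars are $0$ and $1$, so every $\mathbb{F}_2$-linear combination of $X_1, \ldots, X_r$ under $\boxplus$ has the form $\boxplus_{i \in S} X_i$ for a unique $S \subseteq [r]$, the empty sum being the identity $\{0\}$. Hence proving linear independence reduces to showing
\[
\boxplus_{i \in S} X_i \neq \{0\} \qquad \text{for every nonempty } S \subseteq [r].
\]

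Next I would bound the dimension of such a sum from below. If $|S| = 1$ the sum is a single nontrivial codeword, hence nonzero. If $|S| \ge 2$, the family $\{X_i : i \in S\}$ is again pairwise disjoint and nontrivial, so Lemma~\ref{6n}(c) gives $\dim(\boxplus_{i \in S} X_i) = \sum_{i \in S} \dim X_i \ge |S| > 0$, using $\dim X_i \ge 1$ for each $i$. In either case the sum has positive dimension, whereas $\{0\}$ has dimension $0$; therefore $\boxplus_{i \in S} X_i \ne \{0\}$, which is exactly the required conclusion.

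I do not expect a genuine obstacle here: once independence is rephrased as the non-vanishing of every subset-$\boxplus$-sum, the theorem is an immediate corollary of the additivity of dimension in Lemma~\ref{6n}(c). The only points demanding care are the reduction itself --- that over $\mathbb{F}_2$ a dependence relation is nothing more than a subset summing to $\{0\}$ --- and the harmless normalization that discards the trivial codeword, so that every $\dim X_i$ is strictly positive.
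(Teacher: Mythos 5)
Your proof is correct and takes essentially the same route as the paper: both reduce $\mathbb{F}_2$-independence to the non-vanishing of every nonempty subset $\boxplus$-sum and then invoke Lemma~\ref{6n} on the (still pairwise disjoint) subfamily --- the paper cites part (b) to identify the $\boxplus$-sum with the ordinary sum $\sum_{i \in S} X_i \neq \{0\}$, while you cite part (c) to get positive dimension, an immaterial difference. Your explicit handling of the trivial codeword and of distinctness is a harmless tightening of the same argument.
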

\begin{proof}
	Suppose that $\{X_1, \ldots, X_m\}$ is a set of pairwise disjoint codewords in a linear code $\mathcal{U}$ with corresponding linear addition $\boxplus$. The statement is trivially true for $m = 2$. To prove it for $m \ge 3$, assume the contrary, i.e. there exists a minimal positive integer $2 \le r \le m$ such that $r$ of the $m$ indecomposable codewords are linearly dependent. Thus, there exist positive integers $i_1, \ldots, i_r$, where $1 \le i_1 < \cdots < i_r \le m$, such that
	\begin{equation*}
	X_{i_1} \boxplus \cdots \boxplus X_{i_r} = \{0\}.
	\end{equation*}
	According to Lemma~\ref{6n}(b), the above equation reduces to
	$\sum_{j=1}^{r} X_{i_j} = \{0\}$, which is a contradiction as each of $X_{i_1}, \ldots, X_{i_r}$ is nontrivial and the result follows.
\end{proof}

\section{Lattice Structure of Linear Codes in Projective Spaces}
\label{S5}

	The maximum possible size of a linear code closed under intersection in $\mathbb{P}_q(n)$ was proved to be $2^n$ in \cite{BK}. We record the formal statement below.
	\begin{theorem}(\cite{BK}, Proposition 17)
		\label{UB}
		If $\mathcal{U}$ is a linear code in $\mathbb{P}_q(n)$ that is closed under intersection then $|\mathcal{U}| \le 2^n$.
	\end{theorem}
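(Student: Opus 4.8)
The plan is to bound the $\mathbb{F}_2$-dimension of $\mathcal{U}$ rather than its cardinality directly. Since $\mathcal{U}$ is closed under intersection, Corollary~\ref{SL} tells us it is a sublattice of the projective lattice, so it is also closed under the vector-space sum $+$; and as a linear (hence quasi-linear) code it is a vector space over $\mathbb{F}_2$ by Definition~\ref{L}, so $|\mathcal{U}| = 2^{\dim_{\mathbb{F}_2}\mathcal{U}}$ in view of Proposition~\ref{1}. Hence it suffices to exhibit a spanning set of $\mathcal{U}$ over $\mathbb{F}_2$ of size at most $n$. The natural candidate is the set of \emph{atoms} of the finite lattice $\mathcal{U}$, i.e. the minimal nonzero codewords $A_1, \dots, A_t$.

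First I would show the atoms are pairwise disjoint. For $i \neq j$, the codeword $A_i \cap A_j$ lies in $\mathcal{U}$ (closure under intersection) and is contained in the atom $A_i$, so by minimality it equals either $\{0\}$ or $A_i$; the latter forces $A_i \subseteq A_j$ and then $A_i = A_j$ by minimality of $A_j$, a contradiction. Thus $\{A_1,\dots,A_t\}$ is pairwise disjoint in the sense of Definition~\ref{P}. Theorem~\ref{IB} then makes the atoms $\mathbb{F}_2$-linearly independent, while Lemma~\ref{6n} gives $\dim(A_1 + \cdots + A_t) = \sum_{i=1}^t \dim A_i$. Since $A_1 + \cdots + A_t \le \mathbb{F}_q^n$ and each atom is nontrivial, this yields $t \le \sum_{i=1}^t \dim A_i \le n$.

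The crux is to prove that the atoms span $\mathcal{U}$ over $\mathbb{F}_2$, equivalently that every codeword is a $\boxplus$-combination of atoms. I would argue by induction on $\dim X$ that every nonzero $X \in \mathcal{U}$ is a join of atoms. Pick an atom $Y \subseteq X$; if $Y = X$ we are done, otherwise $Y \subset X$, and Lemma~\ref{6i}(b) gives $Y \cap (X \boxplus Y) = \{0\}$ while Lemma~\ref{4} yields $X = Y \boxplus (X \boxplus Y) = Y \oplus (X \boxplus Y)$. Here $X \boxplus Y \in \mathcal{U}$ is nonzero with $\dim(X \boxplus Y) = \dim X - \dim Y < \dim X$ by Lemma~\ref{6e}, so the induction hypothesis expresses it as a join of atoms, and adjoining $Y$ expresses $X$ the same way. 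Because the atoms occurring are pairwise disjoint, Lemma~\ref{6n} lets me replace each join $+$ by $\boxplus$, so $X$ is in fact a $\boxplus$-sum of atoms. Consequently $\{A_1,\dots,A_t\}$ spans $(\mathcal{U},\boxplus)$, whence $\dim_{\mathbb{F}_2}\mathcal{U} \le t \le n$ and $|\mathcal{U}| = 2^{\dim_{\mathbb{F}_2}\mathcal{U}} \le 2^n$.

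I expect the atomistic decomposition of the previous paragraph to be the main obstacle: it is precisely where the atom-generated (geometric) nature of the sublattice is needed, and some care is required to guarantee that the peeled-off codeword $X \boxplus Y$ remains nonzero and strictly lower-dimensional so that the induction terminates, and that the resulting $+$-join of atoms genuinely coincides with their $\boxplus$-span via Lemma~\ref{6n}.
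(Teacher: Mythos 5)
Your proof is correct, but it takes a genuinely different route from the one in the paper. The paper's proof is deliberately short and lattice-theoretic: having shown that a linear code closed under intersection is a \emph{distributive} sublattice of the projective lattice, it observes that this sublattice has height at most $n$ and invokes Birkhoff's fundamental theorem of finite distributive lattices (Theorem~\ref{FT}) to conclude $|\mathcal{U}| \le 2^n$ in one line. You instead rebuild the combinatorial argument from first principles: your ``atoms'' are exactly the indecomposable codewords of Definition~\ref{9} (the paper notes this identification in a remark), your pairwise-disjointness claim is Lemma~\ref{10}, and your inductive peeling-off of an atom $Y \subseteq X$ via $X = Y \oplus (X \boxplus Y)$ is essentially the decomposition of Proposition~\ref{11}, i.e.\ the original proof from \cite{BK} that the present paper explicitly sets out to give an alternative to. Your argument is sound throughout --- in particular, the existence of an atom inside any nonzero codeword, the strict dimension drop $\dim(X \boxplus Y) = \dim X - \dim Y < \dim X$, and the fact that the peeled atom $Y$ cannot recur among the atoms of $X \boxplus Y$ (since $Y \cap (X \boxplus Y) = \{0\}$) all check out, and the bound $t \le \sum_i \dim A_i \le n$ closes the argument. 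What your approach buys is self-containment: it needs only the elementary Lemmas~\ref{4}, \ref{6e}, \ref{6i} and \ref{6n} and avoids both the distributivity theorem and Birkhoff's theorem. What the paper's approach buys is brevity and the demonstration that the lattice-theoretic characterization has concrete payoff; the heavy lifting is absorbed into the proof that the sublattice is geometric distributive.
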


	The proof of the above theorem relies on the notion of \emph{indecomposable codewords}, which were introduced in \cite{BK}. We record the formal definition here along with a few important properties of indecomposable codewords that were proved in \cite{BK}. The first two proofs are omitted.
	\begin{definition}
		\label{9}
		A codeword $Y \neq \{0\}$ of a linear subspace code $\mathcal{U}$ is said to be \emph{indecomposable} if $Y$ cannot be expressed as $Y = Y_1 \boxplus Y_2$ for any $Y_1,Y_2 \in \mathcal{U}$ with $\dim Y_1, \dim Y_2 < \dim Y$.
	\end{definition}
	\begin{remark}
		Note that $\{0\}$ can never be an indecomposable codeword in a linear code in a projective space.
	\end{remark}
	\begin{lemma}(\cite{BK}, Lemma~9)
	\label{lem:basic}
		Let $Y$ be an indecomposable codeword of a linear subspace code $\mathcal{U}$. Then, for any codeword $X \in \mathcal{U}$, we have $X \subseteq Y$ iff $X = \{0\}$ or $X=Y$.
	\end{lemma}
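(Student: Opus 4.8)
The reverse implication is immediate: if $X = \{0\}$ then $X \subseteq Y$ trivially, and if $X = Y$ then certainly $X \subseteq Y$. So all the content lies in the forward implication, and the plan is to prove its contrapositive by contradiction. Assuming there is a codeword $X \in \mathcal{U}$ with $X \subseteq Y$ but $X \neq \{0\}$ and $X \neq Y$, I will manufacture a decomposition of $Y$ that is forbidden by Definition~\ref{9}.

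The key step is to convert the containment into an additive decomposition using the group structure of $(\mathcal{U}, \boxplus)$. By Lemma~\ref{3}, setting $Z = X \boxplus Y$ yields $Y = X \boxplus Z = X \boxplus (X \boxplus Y)$. Writing $Y_1 := X$ and $Y_2 := X \boxplus Y$, both of which lie in $\mathcal{U}$ since $(\mathcal{U}, \boxplus)$ is a group, I obtain $Y = Y_1 \boxplus Y_2$. To contradict indecomposability it then suffices to show that $\dim Y_1 < \dim Y$ and $\dim Y_2 < \dim Y$, since $Y_1, Y_2 \in \mathcal{U}$.

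This is where the hypotheses $X \neq \{0\}$ and $X \neq Y$ come in. Because $X \subseteq Y$ and $X \neq Y$, the containment is strict, so $0 < \dim X < \dim Y$, where the left inequality uses $X \neq \{0\}$. The ``in particular'' clause of Lemma~\ref{2} applied to $X \subseteq Y$ gives $\dim Y_2 = \dim(X \boxplus Y) = \dim Y - \dim X$. Substituting the bounds on $\dim X$ then forces $\dim Y_1 = \dim X < \dim Y$ and $\dim Y_2 = \dim Y - \dim X < \dim Y$ (the latter because $\dim X > 0$). Thus both summands are strictly smaller than $Y$, contradicting the indecomposability of $Y$; hence $X = \{0\}$ or $X = Y$.

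I do not anticipate a genuine obstacle here. The only point requiring care is verifying that \emph{both} summands are proper, which amounts precisely to the pair of strict inequalities $0 < \dim X < \dim Y$, and these are forced exactly by excluding the two trivial cases. As an alternative route one could invoke Lemma~\ref{6i}(a) to see directly that $Y_2 = X \boxplus Y \subsetneq Y$, but the dimension count via Lemma~\ref{2} is cleaner and self-contained.
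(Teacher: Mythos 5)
Your proof is correct: the decomposition $Y = X \boxplus (X \boxplus Y)$ together with the ``in particular'' clause of Lemma~\ref{2} (giving $\dim(X \boxplus Y) = \dim Y - \dim X$) yields two summands of dimension strictly less than $\dim Y$ precisely because $0 < \dim X < \dim Y$, contradicting Definition~\ref{9}. The paper omits its own proof of this lemma (deferring to the cited reference), but your argument is the natural one and there is nothing to object to.
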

\begin{lemma}(\cite{BK}, Lemma~10)
	\label{10}
	If $Y_1, Y_2$ are any two distinct indecomposable codewords of a linear subspace code $\mathcal{U}$ that is closed under intersection, then $Y_1 \cap Y_2 = \left\{0\right\}$. Consequently, $Y_1 \boxplus Y_2 = Y_1 + Y_2$, and $\dim(Y_1 \boxplus Y_2) = \dim Y_1 + \dim Y_2$.
\end{lemma}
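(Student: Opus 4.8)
The plan is to show that the intersection $Y_1 \cap Y_2$ is trivial by leveraging the indecomposability of \emph{both} codewords, and then to read off the two consequences directly from Lemma~\ref{4}. The crucial preliminary observation is that since $\mathcal{U}$ is closed under intersection, the subspace $Y_1 \cap Y_2$ is itself a codeword of $\mathcal{U}$; this is precisely the membership hypothesis needed to invoke Lemma~\ref{lem:basic}, which quantifies only over elements of $\mathcal{U}$.

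First I would note that $Y_1 \cap Y_2 \subseteq Y_1$ and $Y_1 \cap Y_2 \in \mathcal{U}$. Applying Lemma~\ref{lem:basic} to the indecomposable codeword $Y_1$ then forces either $Y_1 \cap Y_2 = \{0\}$ or $Y_1 \cap Y_2 = Y_1$. The first case is exactly what we want, so the work lies in ruling out the second. If $Y_1 \cap Y_2 = Y_1$, then $Y_1 \subseteq Y_2$, and a second application of Lemma~\ref{lem:basic}, now to the indecomposable codeword $Y_2$ with the codeword $Y_1 \in \mathcal{U}$, yields $Y_1 = \{0\}$ or $Y_1 = Y_2$. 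Since an indecomposable codeword is by definition nontrivial, $Y_1 \neq \{0\}$, and we are left with $Y_1 = Y_2$, contradicting the assumed distinctness of $Y_1$ and $Y_2$. This eliminates the second case and establishes $Y_1 \cap Y_2 = \{0\}$.

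With the trivial intersection in hand, the two remaining assertions follow immediately. Lemma~\ref{4} applies verbatim to the pair $Y_1, Y_2$ with $Y_1 \cap Y_2 = \{0\}$, giving both $Y_1 \boxplus Y_2 = Y_1 \oplus Y_2 = Y_1 + Y_2$ and $\dim(Y_1 \boxplus Y_2) = \dim Y_1 + \dim Y_2$.

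I do not anticipate a genuine obstacle here, as the argument reduces to a double application of Lemma~\ref{lem:basic}. The one point requiring care is the appeal to closure under intersection: without it, $Y_1 \cap Y_2$ need not be a codeword, and Lemma~\ref{lem:basic} could not be brought to bear. This is exactly why the hypothesis that $\mathcal{U}$ is closed under intersection is indispensable and cannot be dropped.
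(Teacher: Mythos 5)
Your proof is correct. The paper itself omits the proof of this lemma (it is reproduced from \cite{BK}), so there is nothing to compare line by line; but your argument---closure under intersection puts $Y_1 \cap Y_2$ in $\mathcal{U}$, a double application of Lemma~\ref{lem:basic} (first to $Y_1$, then, in the case $Y_1 \cap Y_2 = Y_1 \subseteq Y_2$, to $Y_2$) rules out everything except $Y_1 \cap Y_2 = \{0\}$, and Lemma~\ref{4} then gives both consequences---is precisely the role Lemma~\ref{lem:basic} is stated in order to play, and is the intended argument. Your remark that closure under intersection is indispensable because Lemma~\ref{lem:basic} quantifies only over codewords is exactly the right point of care.
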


Thus, the collection of indecomposable codewords in a linear code closed under intersection is pairwise disjoint. The following result is then a direct consequence of Lemma~\ref{6n}(b).
\begin{lemma}(\cite{BK}, Lemma~11(b))
	\label{lem:gen}
	Let $Y_1,Y_2,\ldots,Y_m$, $m \ge 2$, be distinct indecomposable codewords of a linear subspace code $\mathcal{U}$ that is closed under intersection. Then,
	\begin{equation*}
		 Y_1 \boxplus Y_2 \boxplus \cdots \boxplus Y_m = Y_1 + Y_2 + \cdots + Y_m.
	\end{equation*}
\end{lemma}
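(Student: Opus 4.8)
The plan is to recognize that this statement is essentially an immediate corollary of the machinery already developed for pairwise disjoint codewords, so the real work reduces to verifying that a family of distinct indecomposable codewords satisfies the hypotheses of Lemma~\ref{6n}(b). I expect no genuine obstacle here; the proof is a two-step reduction rather than a fresh argument.

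First I would invoke Lemma~\ref{10}: since $\mathcal{U}$ is closed under intersection and the $Y_1, \ldots, Y_m$ are distinct indecomposable codewords, any two of them intersect trivially, i.e. $Y_i \cap Y_j = \{0\}$ for all $i \neq j$. This is precisely the defining condition (Definition~\ref{P}) for the set $\{Y_1, \ldots, Y_m\}$ to be pairwise disjoint. Second, I would observe that each $Y_i$ is nontrivial by Definition~\ref{9}, since $\{0\}$ is never indecomposable. Together these two observations confirm that $\{Y_1, \ldots, Y_m\}$ is a set of pairwise disjoint \emph{nontrivial} codewords of $\mathcal{U}$.

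With both hypotheses in place, the conclusion follows directly by applying Lemma~\ref{6n}(b) to the family $\{Y_1, \ldots, Y_m\}$, which gives exactly
\begin{equation*}
Y_1 \boxplus Y_2 \boxplus \cdots \boxplus Y_m = Y_1 + Y_2 + \cdots + Y_m.
\end{equation*}
The only point requiring care is to make the nontriviality and pairwise-disjointness checks explicit before appealing to Lemma~\ref{6n}, so that the induction hypothesis underlying that lemma is legitimately invoked; beyond this bookkeeping, the statement is a direct specialization of the pairwise-disjoint case to the indecomposable codewords.
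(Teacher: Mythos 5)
Your proposal is correct and matches the paper's own argument exactly: the paper likewise derives pairwise disjointness of distinct indecomposable codewords from Lemma~\ref{10} and then cites Lemma~\ref{6n}(b) as giving the result directly. Your explicit check of nontriviality (via the remark that $\{0\}$ is never indecomposable) is a small but welcome addition of rigor that the paper leaves implicit.
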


It was established in \cite{BK} that any codeword of a linear code closed under intersection can be decomposed into a $\boxplus$-sum of its indecomposable codewords. Also, such a decomposition is unique.
\begin{proposition}(\cite{BK}, Proposition~12)
	\label{11}
	Let $\mathcal{U}$ be a linear subspace code that is closed under intersection, and let $Y_1, Y_2, \ldots, Y_m$ be its indecomposable codewords. Then, any codeword $X \in \mathcal{U}$ can be uniquely expressed as $\boxplus_{i \in \mathcal{I}} Y_i$ for some $\mathcal{I} \subseteq [m]$.
\end{proposition}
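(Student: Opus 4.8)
The plan is to show that the indecomposable codewords $Y_1, \ldots, Y_m$ form a basis of $\mathcal{U}$ viewed as a vector space over $\mathbb{F}_2$ (via conditions (i)--(iii) of Definition~\ref{L}). Once this is in place, the proposition is precisely the statement that every vector of an $\mathbb{F}_2$-space has a unique coordinate vector relative to a basis: the \emph{existence} of the subset $\mathcal{I}$ is the spanning property, and its \emph{uniqueness} is linear independence. The independence half is already available: by Lemma~\ref{10} the distinct indecomposable codewords are pairwise disjoint, and hence by Theorem~\ref{IB} they are linearly independent over $\mathbb{F}_2$.

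For existence I would induct on $\dim X$, with base case $X = \{0\}$ handled by the empty $\boxplus$-sum ($\mathcal{I} = \emptyset$). In the inductive step, if a codeword $X$ with $\dim X > 0$ is itself indecomposable, then $X = Y_i$ for some $i$, since $Y_1, \ldots, Y_m$ are \emph{all} the indecomposable codewords of $\mathcal{U}$. Otherwise Definition~\ref{9} gives $X = A \boxplus B$ with $\dim A, \dim B < \dim X$; applying the induction hypothesis to $A$ and $B$ and combining, I would obtain
\begin{equation*}
X = A \boxplus B = \left( \boxplus_{i \in \mathcal{I}_A} Y_i \right) \boxplus \left( \boxplus_{i \in \mathcal{I}_B} Y_i \right) = \boxplus_{i \in \mathcal{I}_A \triangle \mathcal{I}_B} Y_i,
\end{equation*}
the last equality using that $(\mathcal{U}, \boxplus)$ is abelian with every element self-inverse, so indices lying in $\mathcal{I}_A \cap \mathcal{I}_B$ cancel in pairs. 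Taking $\mathcal{I} = \mathcal{I}_A \triangle \mathcal{I}_B$ closes the induction.

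For uniqueness I would suppose $X = \boxplus_{i \in \mathcal{I}} Y_i = \boxplus_{i \in \mathcal{J}} Y_i$, add the two expressions, and again use the self-inverse property to get $\boxplus_{i \in \mathcal{I} \triangle \mathcal{J}} Y_i = X \boxplus X = \{0\}$. Linear independence of the $Y_i$ then forces $\mathcal{I} \triangle \mathcal{J} = \emptyset$, i.e. $\mathcal{I} = \mathcal{J}$.

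I expect the main obstacle to be the existence argument rather than uniqueness. The delicate point is that the two sub-decompositions must be merged \emph{within} the $\mathbb{F}_2$-group structure: repeated indecomposables annihilate in pairs, so the correct index set is the symmetric difference $\mathcal{I}_A \triangle \mathcal{I}_B$, not a union or a multiset, and it is the induction on dimension that guarantees the recursive splitting of decomposable codewords actually terminates at indecomposables. Uniqueness, by contrast, drops out almost immediately once Lemma~\ref{10} and Theorem~\ref{IB} are invoked.
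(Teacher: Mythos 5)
Your proof is correct. The paper does not actually spell out a proof of Proposition~\ref{11} (it is imported from \cite{BK}), but your argument --- existence by strong induction on $\dim X$, splitting any decomposable codeword via Definition~\ref{9} and cancelling repeated indecomposables using the self-inverse abelian structure, plus uniqueness from the linear independence furnished by Lemma~\ref{10} together with Theorem~\ref{IB} --- is exactly the intended route, and is consistent with Remark~\ref{R}, which records that the indecomposable codewords form an $\mathbb{F}_2$-basis of $\mathcal{U}$.
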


We now use the notion of indecomposable codewords to bring out an important property of the sublattice of the projective lattice formed by a linear subspace code closed under intersection, namely, that the sublattice is geometric distributive.
\begin{theorem}
	The sublattice of the projective lattice formed by a linear code closed under intersection is geometric distributive.
\end{theorem}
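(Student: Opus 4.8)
The plan is to show that the sublattice $\mathcal{U}$, which by Corollary~\ref{SL} carries meet $\cap$ and join $+$, is isomorphic to the Boolean lattice $(\mathcal{P}([m]), \cup, \cap)$, where $Y_1, \ldots, Y_m$ are the indecomposable codewords of $\mathcal{U}$; the theorem then follows at once because every Boolean lattice is geometric distributive. Concretely, I would define the map $\phi: \mathcal{P}([m]) \to \mathcal{U}$ by $\phi(\mathcal{I}) = \boxplus_{i \in \mathcal{I}} Y_i$, which equals $\sum_{i \in \mathcal{I}} Y_i$ by Lemma~\ref{lem:gen}. Proposition~\ref{11} then makes $\phi$ a bijection, since every codeword admits a unique decomposition of this form.

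Join-preservation is immediate: $\phi(\mathcal{I}) + \phi(\mathcal{J}) = \sum_{i \in \mathcal{I} \cup \mathcal{J}} Y_i = \phi(\mathcal{I} \cup \mathcal{J})$. The crux of the argument is meet-preservation, namely $\phi(\mathcal{I}) \cap \phi(\mathcal{J}) = \phi(\mathcal{I} \cap \mathcal{J})$. Here I would use that the indecomposable codewords are pairwise disjoint by Lemma~\ref{10}, so that by Lemma~\ref{6n}(a) the sum $Y_1 + \cdots + Y_m$ is direct: every vector in it has a unique representation $\sum_{i=1}^m u_i$ with $u_i \in Y_i$. A vector lies in $\phi(\mathcal{I})$ precisely when $u_i = 0$ for every $i \notin \mathcal{I}$; imposing this simultaneously for $\mathcal{I}$ and $\mathcal{J}$ forces $u_i = 0$ for all $i \notin \mathcal{I} \cap \mathcal{J}$, which yields $\phi(\mathcal{I}) \cap \phi(\mathcal{J}) \subseteq \sum_{i \in \mathcal{I} \cap \mathcal{J}} Y_i$. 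The reverse inclusion is trivial, so equality holds and $\phi$ is a lattice isomorphism.

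Once the isomorphism is established, I would verify directly that $\mathcal{P}([m])$ is geometric distributive in the sense of Definition~\ref{D7}: it is distributive (union distributes over intersection), hence modular; its atoms are the singletons $\{i\}$; and every subset is the union of the singletons it contains, so every element is a join of atoms and the lattice is geometric. Transporting these properties through $\phi$ shows that $\mathcal{U}$ is geometric distributive, with its atoms being exactly the indecomposable codewords $Y_1, \ldots, Y_m$, which is consistent with Lemma~\ref{lem:basic}.

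I expect the meet-preservation step to be the main obstacle, since it is the one place where the lattice meet $\cap$ must be translated into the set-intersection of index sets, and this translation depends essentially on the unique direct-sum representation supplied by Lemmas~\ref{10} and~\ref{6n}. By contrast, the join computation and the final geometric/distributive verification for the Boolean lattice are routine once the isomorphism $\phi$ is in hand.
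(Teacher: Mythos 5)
Your proposal is correct, and it reaches the conclusion by a route that overlaps with but is not identical to the paper's. Both arguments rest on the same skeleton: by Proposition~\ref{11} every codeword is a unique $\boxplus$-sum of indecomposables $Y_i$, and by Lemma~\ref{lem:gen} the join $+$ corresponds to union of index sets. The divergence is in the crux step, the meet. The paper computes $A \cap B$ algebraically via the Union--Intersection Theorem~\ref{6p}, using the identity $A \cap B = A \boxplus B \boxplus (A+B)$ together with the symmetric-difference rule for $\boxplus$ on index sets to obtain $A \cap B = \sum_{i \in S_A \cap S_B} Y_i$; you instead argue at the level of vectors, using Lemma~\ref{10} and Lemma~\ref{6n}(a) to make $Y_1 + \cdots + Y_m$ a direct sum and then reading off membership in $\phi(\mathcal{I})$ from the vanishing of components, which is an equally valid and arguably more elementary derivation of the same index-set formula. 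A second genuine difference is how the ``geometric'' half is handled: the paper asserts that a sublattice of the geometric lattice $\mathbb{P}_q(n)$ is geometric and only verifies distributivity directly (relegating the identification of atoms to a remark), whereas you package everything as a lattice isomorphism $\phi: \mathcal{P}([m]) \to \mathcal{U}$ and transport both modularity and the join-of-atoms property from the Boolean lattice. This buys you a self-contained proof that does not lean on the inheritance of geometricity by sublattices (a claim that requires care, since atoms of a sublattice need not be atoms of the ambient lattice), and it yields the characterization of the atoms of $\mathcal{U}$ as exactly the indecomposable codewords as a byproduct rather than as an afterthought.
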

\begin{proof}
	Let $\mathcal{U}$ be a linear code in $\mathbb{P}_q(n)$ closed under intersection. Since the projective lattice $(\mathbb{P}_q(n), +, \cap)$ is geometric, so is any sublattice of it. By Corollary~\ref{SL}, $\mathcal{U}$ is a sublattice of $\mathbb{P}_q(n)$, thus it suffices to show that $\mathcal{U}$ is distributive. In particular, we need to prove that $X_1 \cap (X_2 + X_3) = (X_1 \cap X_2) + (X_1 \cap X_3)$ for all $X_1, X_2, X_3 \in \mathcal{U}$ (see Example~\ref{PL}). Suppose $\{Y_1, \ldots, Y_m\}$ is the set of all indecomposable codewords in $\mathcal{U}$. Proposition~\ref{11} then allows us to write:
	\begin{equation*}
	X_1 = \boxplus_{i \in \mathcal{I}_1} Y_i, \ \ X_2 = \boxplus_{j \in \mathcal{I}_2} Y_j, \ \ X_3 = \boxplus_{l \in \mathcal{I}_3} Y_l,
	\end{equation*}
	for fixed $\mathcal{I}_1, \mathcal{I}_2, \mathcal{I}_3 \subseteq [m]$. Observe that for $A, B \in \mathcal{U}$ such that $A = \boxplus_{i \in S_A} Y_i$ and $B = \boxplus_{j \in S_B} Y_j$ where $S_A, S_B \subseteq [m]$, we have $A \boxplus B = \boxplus_{i \in S_A \triangle S_B} Y_i$, and by Lemma~\ref{lem:gen},
	\begin{eqnarray}
	\label{C1}
	A+B &=& \left(\boxplus_{i \in S_A} Y_i\right) + \left(\boxplus_{j \in S_B} Y_j\right) \nonumber \\
	&=& \biggl(\sum_{i \in S_A} Y_i\biggr) + \biggl(\sum_{j \in S_B} Y_j\biggr) = \sum_{i \in S_A \cup S_B} Y_i.
	\end{eqnarray}
	Theorem~\ref{6p} and Lemma~\ref{lem:gen} then together imply that,
	\begin{eqnarray}
	\label{C2}
	A \cap B &=& A \boxplus B \boxplus (A+B) \nonumber \\
	&=& (\boxplus_{i \in S_A} Y_i) \boxplus (\boxplus_{j \in S_B} Y_j) \boxplus \left(\sum_{l \in S_A \cup S_B} Y_l\right) \nonumber \\
	&=& (\boxplus_{i \in S_A} Y_i) \boxplus (\boxplus_{j \in S_B} Y_j) \boxplus (\boxplus_{l \in S_A \cup S_B} Y_l) \nonumber \\
	&=& \boxplus_{i \in S_A \cap S_B} Y_i = \sum_{i \in S_A \cap S_B} Y_i.
	\end{eqnarray}
	
	We must have $X_1 \cap (X_2 + X_3), (X_1 \cap X_2) + (X_1 \cap X_3) \in \mathcal{U}$ by Theorem~\ref{6p}. According to \eqref{C1} and \eqref{C2} we then compute the following:
	\begin{eqnarray}
	\label{C3}
		X_1 \cap (X_2 + X_3) &=& \sum\limits_{i \in \mathcal{I}_1 \cap (\mathcal{I}_2 \cup \mathcal{I}_3)} Y_i; \\
		(X_1 \cap X_2) + (X_1 \cap X_3) &=& \left(\sum\limits_{j \in \mathcal{I}_1 \cap \mathcal{I}_2} Y_j\right) + \left(\sum\limits_{j \in \mathcal{I}_1 \cap \mathcal{I}_3} Y_l\right) \nonumber \\ 
		&=& \sum\limits_{j \in (\mathcal{I}_1 \cap \mathcal{I}_2) \cup (\mathcal{I}_1 \cap \mathcal{I}_3)} Y_j. \label{C4}
	\end{eqnarray}

	For subsets $\mathcal{I}_1, \mathcal{I}_2, \mathcal{I}_3 \subseteq [m]$ we have, by distributivity of set intersection over set union,
	\begin{equation*}
	\mathcal{I}_1 \cap (\mathcal{I}_2 \cup \mathcal{I}_3) = (\mathcal{I}_1 \cap \mathcal{I}_2) \cup (\mathcal{I}_1 \cap \mathcal{I}_3),
	\end{equation*}
	which along with \eqref{C3} and \eqref{C4} prove that $X_1 \cap (X_2 + X_3) = (X_1 \cap X_2) + (X_1 \cap X_3)$. Hence $\mathcal{U}$ is distributive.
\end{proof}
	\begin{remark}
	In the preceding proof, $\mathcal{U}$ is a geometric sublattice of the projective lattice. By definition, any nontrivial codeword $Z \in \mathcal{U}$ is a join of atoms. We can deduce that the atoms in $\mathcal{U}$ are precisely the set of its indecomposable codewords: That an indecomposable codeword in $\mathcal{U}$ is an atom follows directly from Lemma~\ref{lem:basic}. To prove the converse, we express an atom $X \in \mathcal{U}$ in the following way using Proposition~\ref{11} and Lemma~\ref{lem:gen}: $$X = \boxplus_{i \in \mathcal{I}} Y_i = \sum_{i \in \mathcal{I}} Y_i$$ for some nonempty $\mathcal{I} \subseteq [m]$. Clearly $\{0\} < Y_i < X$ for any $i \in \mathcal{I}$, which indicates that $|\mathcal{I}| = 1$, i.e., $X = Y_i$ for some $i \in [m]$.
\end{remark}
\begin{remark}
	\label{R}
	The set of indecomposable codewords in a linear subspace code closed under intersection is linearly independent with respect to the linear addition over $\mathbb{F}_2$. This together with Proposition~\ref{11} imply that the indecomposable codewords are a basis for the vector space over $\mathbb{F}_2$ formed by the linear code.
\end{remark}

As an application of the lattice-theoretic characterization of the linear subspace codes closed under intersection, we now give an alternative proof of the upper bound on size of such class of codes. First we need to state a result that follows directly from the \emph{fundamental theorem of finite distributive lattices} by Birkhoff \cite{B}.
\begin{theorem}(\cite{B}, Ch.~IX , Sec.~4, Ex.~1)
	\label{FT}
	A distributive lattice of height $n$ contains at most $2^n$ elements.
\end{theorem}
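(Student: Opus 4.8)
The plan is to invoke Birkhoff's representation theorem for finite distributive lattices, which is exactly the fundamental theorem referenced in the statement. This theorem asserts that every finite distributive lattice $L$ is isomorphic to the lattice $\mathcal{D}(P)$ of down-sets of the poset $P = J(L)$ of \emph{join-irreducible} elements of $L$ (those $x \neq O$ that cannot be written as $a \vee b$ with $a, b \prec x$), where $P$ carries the order inherited from $L$ and $\mathcal{D}(P)$ is ordered by inclusion. The isomorphism sends each $x \in L$ to the down-set $\{\, p \in J(L) : p \preceq x \,\}$.

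First I would translate the height hypothesis into a statement about $|J(L)|$. A maximal chain in $\mathcal{D}(P)$ runs from the least element $\emptyset$ to the greatest element $P$ through cover relations, and a down-set $D'$ covers a down-set $D$ precisely when $D' = D \cup \{p\}$ for a single $p$ that is minimal in $P \setminus D$. Hence every maximal chain in $\mathcal{D}(P)$ adjoins the elements of $P$ one at a time and so has length exactly $|P|$. Because $L \cong \mathcal{D}(P)$, the height of $L$ equals $|P| = |J(L)|$; the hypothesis that $L$ has height $n$ therefore forces $|J(L)| = n$.

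The counting step is then immediate. Every down-set of $P$ is in particular a subset of $P$, so the identity assignment $D \mapsto D$ embeds $\mathcal{D}(P)$ into the power set of $P$. Consequently $|L| = |\mathcal{D}(P)| \le 2^{|P|} = 2^n$, as claimed.

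The main obstacle is the identification of the height of $L$ with the number of join-irreducibles: once maximal chains in $\mathcal{D}(P)$ are recognised as linear extensions of $P$, the cardinality bound falls out at once. I would also remark that equality $|L| = 2^n$ holds exactly when $P$ is an antichain, in which case $\mathcal{D}(P)$ is the full Boolean lattice on $n$ elements---precisely the extremal configuration relevant to the subsequent code-size bound.
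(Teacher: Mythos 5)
Your proof is correct and takes exactly the route the paper intends: the paper states this result without proof, citing it as an exercise that follows from Birkhoff's fundamental theorem of finite distributive lattices, and your argument (representing $L$ as the down-set lattice of its join-irreducibles, identifying the height with $|J(L)|$ via the one-element-at-a-time structure of maximal chains, then bounding down-sets by subsets) is precisely the standard derivation from that theorem. The only point worth noting is that you implicitly use finiteness of $L$ when invoking the representation theorem, which is harmless here since the paper restricts attention to finite lattices.
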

\begin{proof}(Proof of Theorem~\ref{UB})
	The height of the projective lattice $\mathbb{P}_q(n)$ is $n$. Thus $\mathcal{U}$, a distributive sublattice of $\mathbb{P}_q(n)$, is of height at most $n$, whence $|\mathcal{U}| \le 2^n$ follows using Theorem~\ref{FT}.
\end{proof}
\begin{remark}
	The number of indecomposable codewords that a linear code closed under intersection in $\mathbb{P}_q(n)$ admits can be at most $n$. However, there are examples of linear codes in $\mathbb{P}_q(n)$ that are not closed under intersection, in which the number of indecomposable codewords is as high as $(2^n - 1)$ (E.g. \cite{BEV}, Example~1).
\end{remark}
\section{Conclusion}
\label{S6}

We have studied similarities in the structure of binary linear block codes and linear codes in a projective space; and explored lattice structure in linear subspace codes. Our findings indicate that the only class of linear codes in $\mathbb{P}_q(n)$ that are sublattices of the projective lattice $(\mathbb{P}_q(n), + , \cap)$ are those closed under intersection. Such linear codes were shown to have maximum size of at most $2^n$ \cite{BK}. However there are examples of other linear codes which are not closed under intersection yet attain the bound conjectured by Braun et al. These class of codes cannot be studied within a lattice framework. Thus it is necessary to view linear codes in a projective space using a more general setting. Whether the BEV bound (Conjecture~\ref{C}) holds true for all linear codes remains an interesting open problem.

We observed earlier that the indecomposable codewords in a linear subspace code constitute a basis for the vector space over $\mathbb{F}_2$ formed by the code (Remark~\ref{R}). We refer to such a basis as an \emph{indecomposable basis}. A linear code closed under intersection has a unique indecomposable basis. However, there are examples of linear codes in projective spaces that have more indecomposable codewords than the dimension of the codes (\cite{BEV}, Example~1, Remark~3). Such linear codes may not possess a unique indecomposable basis. In fact we conjecture the following.

\begin{conjecture}
	\label{UIB}
	A linear code $\mathcal{U}$ in $\mathbb{P}_q(n)$ has a unique indecomposable basis if and only if $\mathcal{U}$ is closed under intersection.
\end{conjecture}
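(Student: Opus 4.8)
The plan is to recast the statement about indecomposable bases as a counting statement and then treat the two implications separately. First I would record that in \emph{any} linear subspace code the indecomposable codewords span $(\mathcal{U},\boxplus)$ over $\mathbb{F}_2$: by Definition~\ref{9} a nonzero codeword that is not indecomposable splits as a $\boxplus$-sum of two strictly lower-dimensional codewords, and inducting on dimension terminates at indecomposables. Hence an indecomposable basis (a basis of $(\mathcal{U},\boxplus)$ consisting of indecomposable codewords) always exists, being extractable from this spanning set. I would then show that $\mathcal{U}$ has a \emph{unique} indecomposable basis if and only if the whole set of indecomposables is already linearly independent, i.e.\ its size equals $\dim_{\mathbb{F}_2}\mathcal{U}$. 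One direction is immediate: an independent spanning set is the only basis it contains. The other is a standard exchange argument: if the indecomposables are dependent there are more than $\dim_{\mathbb{F}_2}\mathcal{U}$ of them, so any extracted basis $B$ omits some indecomposable $Y$; writing $Y$ over $B$ exhibits a basis element $Y'$ in its support, and $(B\setminus\{Y'\})\cup\{Y\}$ is a second indecomposable basis.

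With this reduction the statement becomes: \emph{the indecomposables of $\mathcal{U}$ are linearly independent if and only if $\mathcal{U}$ is closed under intersection.} The backward implication is already in hand, since by Lemma~\ref{10} the indecomposables of a code closed under intersection are pairwise disjoint, hence linearly independent by Theorem~\ref{IB}. All the content therefore sits in the forward implication—that linear independence of the indecomposables forces closure—and this is the step I expect to be the main obstacle. The natural route is to fix two distinct indecomposables $Y_1,Y_2$ and try to prove $Y_1\cap Y_2=\{0\}$, since pairwise disjointness of all indecomposables would yield closure through Lemma~\ref{6n} together with the basis expansions \eqref{C1}--\eqref{C2}.

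The obstacle, however, is a circularity I do not expect to be removable: the only tool tying indecomposability to intersections is Lemma~\ref{lem:basic}, and applying it to $Y_1\cap Y_2$ presupposes that $Y_1\cap Y_2$ is a codeword, which is precisely the closure we are trying to establish. Before investing further I would therefore test the forward implication on small codes, and doing so already breaks it. Take $Y_1=\langle e_1,e_2,e_3\rangle$, $Y_2=\langle e_3,e_4,e_5\rangle$ and $Y_3=\langle e_1,e_2,e_4,e_5\rangle$ in $\mathbb{F}_q^5$, so that $\dim Y_1=\dim Y_2=3$, $\dim Y_3=4$, $\dim(Y_1\cap Y_2)=1$ and $\dim(Y_1\cap Y_3)=\dim(Y_2\cap Y_3)=2$. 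These dimensions are chosen so that $d_S(X,Y)=\dim(X\boxplus Y)$ holds for every pair in the Klein four-group $\mathcal{U}=\{\{0\},Y_1,Y_2,Y_3\}$ with $Y_1\boxplus Y_2=Y_3$, which by Lemma~\ref{2} and its converse is exactly the isometry condition, so $\mathcal{U}$ is a genuine linear code. Here $Y_1,Y_2$ are indecomposable while $Y_3=Y_1\boxplus Y_2$ is not (both summands have smaller dimension), so $\{Y_1,Y_2\}$ is the unique indecomposable basis; yet $Y_1\cap Y_2=\langle e_3\rangle\notin\mathcal{U}$, so $\mathcal{U}$ is not closed under intersection. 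Thus the forward implication fails and the statement as written cannot hold. The honest conclusion of the plan is that the conjecture should be corrected to: \emph{$\mathcal{U}$ is closed under intersection if and only if its indecomposables are pairwise disjoint}—the forward half being Lemma~\ref{10}, and the converse following by expanding two codewords in the now pairwise-disjoint indecomposable basis via Lemma~\ref{6n} and applying Theorem~\ref{6p}—a condition strictly stronger than the linear independence that governs uniqueness of the indecomposable basis.
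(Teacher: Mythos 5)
This statement is posed in the paper only as an open problem (Conjecture~\ref{UIB}); the authors give no proof, so there is no argument of theirs to compare yours against. Judged on its own terms, your analysis is sound and in fact \emph{refutes} the conjecture as stated. Your preliminary reduction is correct: in any linear subspace code the indecomposable codewords span $(\mathcal{U},\boxplus)$ over $\mathbb{F}_2$ (a decomposition $Y=Y_1\boxplus Y_2$ with $\dim Y_1,\dim Y_2<\dim Y$ forces both parts to be nonzero of strictly smaller dimension, so induction on dimension terminates), and your exchange argument correctly shows that the indecomposable basis is unique precisely when the full set of indecomposables is linearly independent. The backward implication of the conjecture is then exactly Lemma~\ref{10} together with Theorem~\ref{IB}, i.e.\ the content of Remark~\ref{R}.

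Your counterexample to the forward implication checks out. With $Y_1=\langle e_1,e_2,e_3\rangle$, $Y_2=\langle e_3,e_4,e_5\rangle$, $Y_3=\langle e_1,e_2,e_4,e_5\rangle$, one computes $d_S(\{0\},Y_1)=d_S(\{0\},Y_2)=d_S(Y_1,Y_3)=d_S(Y_2,Y_3)=3$ and $d_S(\{0\},Y_3)=d_S(Y_1,Y_2)=4$, so $\dim(A\boxplus B)=d_S(A,B)$ for every pair in the Klein four-group; for an abelian group of involutions this identity for all pairs is equivalent to condition (iv) of Definition~\ref{L} (since $d_S(X\boxplus A,X\boxplus B)=\dim((X\boxplus A)\boxplus(X\boxplus B))=\dim(A\boxplus B)$), so $\mathcal{U}$ is genuinely linear. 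By Definition~\ref{9}, $Y_1$ and $Y_2$ are indecomposable (the only candidate decomposition of either involves $Y_3$, which has larger dimension) while $Y_3=Y_1\boxplus Y_2$ is not; hence $\{Y_1,Y_2\}$ is the unique indecomposable basis, yet $Y_1\cap Y_2=\langle e_3\rangle\notin\mathcal{U}$. It is worth observing that your code is precisely the configuration of the paper's remark following Theorem~\ref{6p} with the dimensions skewed: there all three nonzero codewords have equal dimension, so all three are indecomposable and the code has three indecomposable bases, consistent with the conjecture; enlarging one codeword destroys its indecomposability and with it the conjecture. Your proposed repair --- closure under intersection is equivalent to pairwise disjointness of the set of indecomposables --- is correct, the forward half being Lemma~\ref{10} and the converse following from Lemma~\ref{6n}(a) (which makes the sum of all indecomposables direct, so intersections of partial sums are again partial sums). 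The only caveat is presentational: what you have produced is a disproof, not a proof, and should be reported as such.
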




\bibliography{mybibfile}

\end{document}